
\documentclass[sigconf, nonacm]{acmart}

\usepackage{booktabs} 

\usepackage{setspace}
\usepackage{placeins}
\usepackage{afterpage}
\usepackage[utf8]{inputenc}
\usepackage{graphicx}
\usepackage{balance}
\usepackage{amsthm}
\usepackage{color}
\usepackage{caption}
\usepackage{amsmath}
\usepackage{tabu}
\usepackage{array}
\usepackage{booktabs}
\usepackage{threeparttable}
\usepackage{relsize}
\usepackage{physics}

\usepackage{algorithm}
\usepackage[noend]{algpseudocode}

\newlength{\maxwidth}
\newcommand{\algalign}[2]
{\makebox[\maxwidth][r]{$#1{}$}${}#2$}

\newcommand{\revision}[1]{\textcolor{black}{#1}}

\usepackage{float}
\usepackage{stfloats}
\usepackage{lipsum}
\usepackage[english]{babel}

\usepackage{comment}

\usepackage[T1]{fontenc}
\usepackage{mwe}    
\usepackage{subfig}

\let\oldnl\nl
\newcommand{\nonl}{\renewcommand{\nl}{\let\nl\oldnl}}
\newtheorem{thm}{Theorem}
\newtheorem{problem}{Problem}

\theoremstyle{definition}
\newtheorem{defn}{Definition}


\def\Equal{\texttt{=}}

\settopmatter{printacmref=false}

\pagestyle{plain}

\newcommand\vldbdoi{XX.XX/XXX.XX}
\newcommand\vldbpages{XXX-XXX}
\newcommand\vldbvolume{16}
\newcommand\vldbissue{2}
\newcommand\vldbyear{2022}
\newcommand\vldbauthors{\authors}
\newcommand\vldbtitle{\shorttitle} 
\newcommand\vldbavailabilityurl{URL_TO_YOUR_ARTIFACTS}
\newcommand\vldbpagestyle{empty} 

\begin{document}
\title{Models and Mechanisms for Spatial Data Fairness}

\author{Sina Shaham}
\affiliation{%
  \institution{Viterbi School of Engineering\\University of Southern California}
  \state{Los Angeles, California} 
  \country{USA}
}
\email{sshaham@usc.edu}

\author{Gabriel Ghinita}
\affiliation{%
  \institution{College of Science and Engineering\\Hamad Bin Khalifa University\\Qatar Foundation, Doha, Qatar}
}
\email{gghinita@hbku.edu.qa}

\author{Cyrus Shahabi}
\affiliation{%
  \institution{Viterbi School of Engineering\\University of Southern California}
 \state{Los Angeles, California} 
  \country{USA}
}
\email{shahabi@usc.edu}

\begin{abstract}

Fairness in data-driven decision-making studies scenarios where individuals from certain population segments may be unfairly treated when being considered for loan or job applications, access to public resources, or other types of services.   In location-based applications, decisions are based on individual whereabouts, which often correlate with sensitive attributes such as race, income, and education.

While fairness has received significant attention recently, e.g., in machine learning, there is little focus on achieving fairness when dealing with location data. Due to their characteristics and specific type of processing algorithms, location data pose important fairness challenges.
We introduce the concept of {\em spatial data fairness} to address the specific challenges of location data and spatial queries. We devise a novel building block to achieve fairness in the form of {\em fair polynomials}. Next, we propose two mechanisms based on fair polynomials that achieve individual spatial fairness, corresponding to two common location-based decision-making types: {\em distance-based} and {\em zone-based}. Extensive experimental results on real data show that the proposed mechanisms achieve spatial fairness without sacrificing utility.
\end{abstract}

\maketitle

\pagestyle{\vldbpagestyle}
\begingroup\small\noindent\raggedright\textbf{PVLDB Reference Format:}\\
\vldbauthors. \vldbtitle. PVLDB, \vldbvolume(\vldbissue): \vldbpages, \vldbyear.\\
\href{https://doi.org/\vldbdoi}{doi:\vldbdoi}
\endgroup
\begingroup
\renewcommand\thefootnote{}\footnote{\noindent
This work is licensed under the Creative Commons BY-NC-ND 4.0 International License. Visit \url{https://creativecommons.org/licenses/by-nc-nd/4.0/} to view a copy of this license. For any use beyond those covered by this license, obtain permission by emailing \href{mailto:info@vldb.org}{info@vldb.org}. Copyright is held by the owner/author(s). Publication rights licensed to the VLDB Endowment. \\
\raggedright Proceedings of the VLDB Endowment, Vol. \vldbvolume, No. \vldbissue\ %
ISSN 2150-8097. \\
\href{https://doi.org/\vldbdoi}{doi:\vldbdoi} \\
}\addtocounter{footnote}{-1}\endgroup

\ifdefempty{\vldbavailabilityurl}{}{
\vspace{.3cm}
\begingroup\small\noindent\raggedright\textbf{PVLDB Artifact Availability:}\\
The source code, data, and/or other artifacts have been made available at 
\url{https://github.com/SinaShaham/c-Fair-Polynomials/tree/main}
\endgroup
}

\section{Introduction}
\label{sec:intro}

In the past decade, location data became an integral part of many applications, e.g., mobile apps, smart health, smart cities. Individual locations are often used in creating user profiles,
or as an input for various decision-making processes (e.g., machine learning), which may affect an individual's access to public resources, loans, etc.

It is already well-understood that location bias has significant effects on underprivileged communities, e.g., in the context of transportation and housing~\cite{PandeyC21}. Some public authorities designated entire geographical regions as economically-disadvantaged areas (EDA)~\cite{EDA}. However, while fairness has been studied recently in ML settings~\cite{barocas2017fairness} for generic data types~\cite{hajian2016algorithmic}, no specific solution studies fairness in spatial data processing.  \revision{
Addressing spatial data fairness presents two specific challenges:}

\revision{
(1) Location data may lead, intentionally or inadvertently, to exercise bias against individuals from disadvantaged backgrounds in a {\em stealth} fashion. While it is illegal to use race or ethnicity in a loan-granting or hiring decision, one may use location of current residence as input. Even though location may not seem sensitive, it may be used to discriminate against people of a certain ethnicity, as on many occasions people from the same ethnic group congregate in certain spatially-focused communities~\cite{kim2020spatial}. Similar concerns exist for income or education level, which often exhibit strong correlations with the location where an individual works, lives or travels~\cite{cantoni2020precinct}. Note that, this sort of discrimination may often occur inadvertently, as opaque ML algorithms automatically exploit certain correlations in location data (e.g., higher default rates in certain zipcodes), without realizing their fairness implications.}

\revision{
(2) Fairness is achieved through some data transformation designed to prevent, or limit, the amount of bias in processing. This causes loss of utility, whereby the result of processing can be sub-optimal compared to the result obtained on the original data. Achieving fairness requires some utility loss, and the emerging fairness-utility trade-off must be carefully considered when devising a fairness mechanism. In the case of location data, utility has specific formulations, which may impact results in a way that is unique to spatial query processing algorithms. Using generic fairness mechanisms devised for other types of data may lead to poor utility, as seen in~\cite{riederer2017price}. Therefore, it is desirable to design customized mechanisms for fighting bias in location data, such that the utility of spatial information is not significantly decreased. 
}


In this paper, we introduce specific mechanisms targeted at providing {\em spatial fairness} while preserving data utility. We focus on the case of {\em individual} fairness~\cite{dwork2012fairness}, which is more difficult to achieve, but provides a higher level of fairness guarantees compared to its group-level counterpart. We provide specific definitions of location bias, and carefully characterize how location data can be used to exercise discriminatory decisions.

We introduce a novel construction called {\em fair polynomials} (Section 3.1) that can be used as building block within mechanisms for spatial fairness\footnote{While our focus is on geospatial data, some of our results can be extended for other types of data with continuous domains.}. We perform a detailed exploration of fair polynomials in order to understand their properties and the trade-off achieved between enforcing fairness and preserving data utility.

We identify two broad categories of scenarios where location bias occurs, and we define spatial fairness mechanisms for each: 
\begin{itemize}
    \item {\em Distance-based fairness} is relevant in location-based advertising and ride-hailing, where the dominant query type is nearest-neighbors (NN).
    In this setting, location bias occurs when individuals are impacted by their distance to a reference point. \revision{In location-based marketing, an algorithm may advertise special deals to customers that are nearby a newly-opened health food store. The specific coordinates of the customers may be less relevant for utility, and instead the distance to a landmark is the important factor (i.e., the proximity to the store is a good indicator of likelihood of visiting it). While this may be efficient, it can have fairness repercussions. For instance, if the algorithm chooses the first $100$ potential customers to reach based on distance, it is possible that only a rich neighborhood is covered. Customers from a poorer neighborhood that is adjacent to the rich one may never be selected, due to a slightly larger distance threshold. Figure~\ref{Fig: example dtr} illustrates this case. In this situation, we would like to ensure that individuals from poorer backgrounds also get the chance to benefit from special deals on healthy food, even though they are slightly farther away (i.e., avoid the hard decision boundary phenomenon).}
    \item {\em Zone-based fairness} is applicable in scenarios like gerrymandering, loan analysis or insurance pricing, where spatial range queries are the norm. In this case, we look at how to ensure spatial fairness with respect to coordinate values, instead of distances. This setting is broader, as it can provide fairness with respect to any reference point. Conversely, the amount of data utility sacrificed in the process may be higher. Figure~\ref{Fig: example coo} illustrates this point, where two individual homes are quoted significantly different insurance premiums due to their surrounding characteristics. Ideally, we would like the two residences to have similar premiums, given their physical proximity.
\end{itemize}


Our specific contributions are:
\begin{itemize}
    \item We identify the problem of bias in spatial data processing, and formalize the notion of spatial data fairness; 
    \item We introduce two definitions of spatial data fairness based on common interaction types, namely distance-based and zone-based fairness;
    \item We devise the novel concept of {\em fair polynomials}, which can be used as a building block to obtain mechanisms that achieve spatial fairness;
    \item We propose two mechanisms based on fair polynomials that enforce distance-based and zone-based fairness;
    \item We perform an extensive experimental evaluation on real datasets that shows the effectiveness of the proposed mechanisms, and investigates the fairness-utility trade-off.
\end{itemize}

The rest of the paper is organized as follows: Section~\ref{Sec: System Model} formalizes the notions of bias and fairness for location data. Sections~\ref{Sec: Individual Fairness in DtR} and ~\ref{Sec: Individual Location Fairness of Coordinates} introduce our proposed distance-based and zone-based fairness mechanisms, respectively. We survey related work in Section~\ref{Sec: Related Work}. Section~\ref{Sec: Experimental Evaluation} reports the results of our experimental evaluation, followed by conclusions in Section~\ref{Sec: Conclusion}.

\begin{figure}[t]
    \centering
	\subfloat[\revision{Hard decision-boundary on distance to landmark leads to unfairness towards poor neighborhoods.}]{%
	\includegraphics[scale=.5]{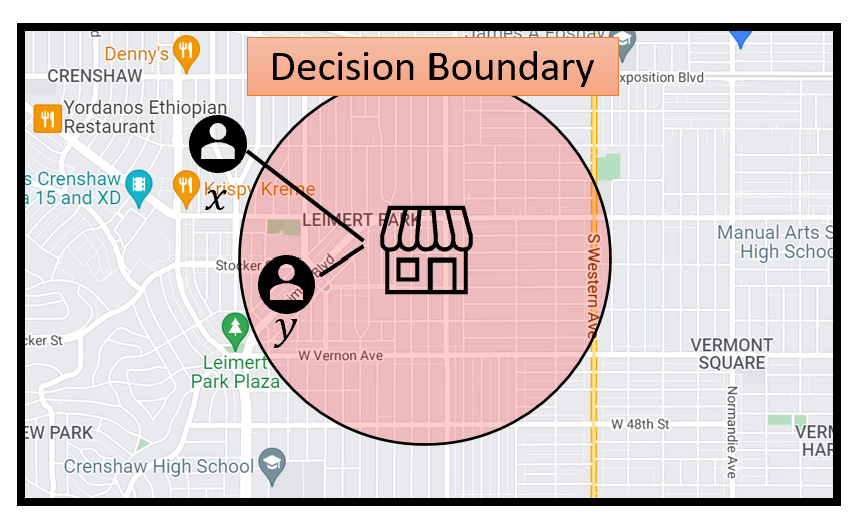}
	\label{Fig: example dtr}
	}
	\vspace{-10pt}
	\subfloat[Despite close proximity, location $x$ is assigned a much higher insurance premium compared to $y$.]{%
	\includegraphics[scale=.5]{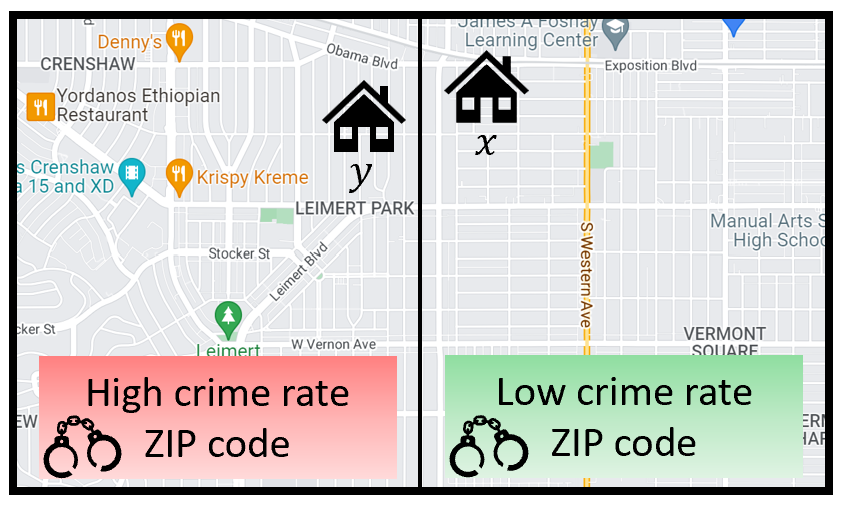}
	\label{Fig: example coo}
	}\vspace{-10pt}
	\caption{Examples of two common location bias scenarios.}
	\vspace{-10pt}
	\label{Fig: running example}
\end{figure}	

\section{System Model}\label{Sec: System Model}

\subsection{Location Bias}

The existence of bias, rooted in data or algorithms, is 
commonly used as a basis for reasoning on unfairness in decision-making. Several sources of bias have been identified in the literature, such as measurement bias~\cite{suresh2019framework} and behavioral bias~\cite{olteanu2019social}, many of which are intertwined. In this work, we formalize a type of bias that occurs due to location data. Location bias is formally defined as follows:

\begin{defn}[{\em Location Bias}]\label{Def: location bias}
Distortion or algorithmic bias generated based on locations of entities in the geospatial domain or their distances to reference points is referred to as {\em location bias}.
\end{defn}

{\em Distortion} refers to the bias intrinsic to the data. Whereas algorithmic bias~\cite{baeza2018bias} refers bias that is generated by processing algorithms. A category of bias closely related to algorithmic bias is called data processing bias~\cite{olteanu2019social}, which occurs during data cleaning, enrichment, and aggregation. Our focus is on location bias sourced in processing algorithms. 
Location bias appears in a variety of applications where distances or locations may cause discrimination against individuals or groups. Consider the example in Fig.~\ref{Fig: Architecture} in which a store wants to contact nearby customers with daily offers. A widely known algorithm such as {\em nearest neighbors} may be used to decide which customers should be contacted. If fairness measures are not considered, and if the store is located in a rich neighborhood, customers who live in less privileged areas of the city may never be contacted, and will be unable to enjoy the special offers.

Such an algorithmic advantage towards a particular group or towards certain individuals is an example of {\em distance-based} location bias with respect to a reference point. The distances can be represented as a one-dimensional input feature; however, the source of location bias can be multidimensional, more commonly stored in two or more dimensions, e.g., as latitudes and longitudes. Consider classifying districts in a city as ``high risk'' or ``low risk'', where more police presence and government resources are allocated to locations with higher crime rates. A strict boundary between a low-risk and high-risk district dictates that nearby individuals placed oppositely across the border are treated differently, despite their proximity. The unfairness manifests in the number of police patrols, the tolerance of police officers to crime, the cost of insurance, and the approval of home improvement loans. 
To address location bias in classification tasks, i.e., to prevent individuals with similar features from being treated significantly different, i.e., {\em unfairly}, we first introduce the notion of location fairness in Section~\ref{Section: location fairness}, and then formulate the problem of achieving location fairness in Section~\ref{Problem Formulation}.

\subsection{Spatial Fairness Definition}\label{Section: location fairness}

There are two categories of fairness definitions, namely {\em group fairness} and {\em individual fairness}~\cite{dwork2018individual}. The former definition addresses the case where a group with certain features is treated statistically different compared to other groups. The latter approach focuses on treating individuals with similar features in a similar way. We adopt the use of individual fairness for spatial data, since {\em (i)} it provides higher fairness guarantees; {\em (ii)} it is more suitable for continuous domain features such as locations, in contrast to categorical features such as education, race, and gender; and {\em (iii)} location attributes tend to be dynamic, and hence more relevant on an individual basis, rather than for a group. To this end, we formally present the notion of {\em individual spatial fairness} in Definition~\ref{def: Individual location fairness}:

\begin{defn} \label{def: Individual location fairness}
(Individual Spatial Fairness). Let $\mathcal{L} = \{ \boldsymbol{l_1},\boldsymbol{l_2},..., \boldsymbol{l_m} \}$ denote the set of individual locations that need to be classified over the output set $\mathcal{A}$, where $\boldsymbol{l_u}\in \mathbb{R}^k $. A randomized mapping $M:\; \mathcal{L} \rightarrow \Delta(\mathcal{A})$ satisfies individual spatial fairness iff for   every two locations $\boldsymbol{l_u},\boldsymbol{l_v}\in \mathcal{L}$ the $(D,d)$-Lipschitz constraint holds,
\begin{equation}
    D(M(\boldsymbol{l_u}),M(\boldsymbol{l_v}))\leq d(\boldsymbol{l_u},\boldsymbol{l_v})
\end{equation}
\end{defn}

Intuitively, the definition states that the evaluation process $M$ for two similar locations should yield similar outcomes. The definition relies on two key distance metrics, (1) similarity distance metric $d:\; V\times V \rightarrow \mathbb{R} $, measuring how similar individuals are, and (2) a distance metric $D(.)$ measuring the distance between outcome distributions. The former metric will be defined thoroughly for locations in the upcoming sections, as it is tailored to the specific location interaction type. The latter metric, on the other hand, is commonly defined as {\em total variation norm} or so-called {\em statistical distance}. 
Given two probability distributions $P$ and $Q$ over outcome space $\mathcal{A}$, the statistical distance is calculated as 
\vspace{-5pt}
\begin{equation}\label{Equ: statistical distance}
    D(P,Q) = \dfrac{1}{2} \sum_{a\in \mathcal{A}} |P(a) - Q(a)|.
\end{equation}

Our focus in this work is on binary decision-making tasks, hence, the output space is given by $\mathcal{A} = \{0,1\}$. We assume a classifier modeled as a randomized mechanism $M: L \rightarrow \Delta(\mathcal{A})$ mapping individuals over outcomes, where $\Delta(\mathcal{A})$ denotes all possible distributions. Thus, the classification of an individual $\boldsymbol{l_u} \in \mathcal{L}$ over outcome space $\mathcal{A}$ is done according to the distribution of $M(\boldsymbol{l_u})$. To simplify notation, we assume function $M$ to return the likelihood of the positive outcome, i.e., $M(\boldsymbol{l_u}) = M(\boldsymbol{l_u}|a=1)$.

\newcommand{\rvec}{\mathrm {\mathbf {r}}} 
\begingroup
\begin{table}
\caption {Summary of notations.} 
\vspace{-10pt}
\centering
\begin{tabular}{>{\arraybackslash}m{2.5cm} >{\arraybackslash}m{5.8cm} }
\hline\hline
  Symbol  & Description\\    \hline
  $\mathcal{L} = \{ \boldsymbol{l_1},..., \boldsymbol{l_m} \}$ & Set of datapoints in $\mathbb{R}^k$ \\
  $l_i$ & Distance from $\boldsymbol{l_i}$ to reference point \\
  $m$   & Number of datapoints\\
  $||.||_p$ & $p$-norm distance\\
  $\mathcal{A}$ & Classification output domain\\
  $d(.)$ & Distance between datapoints\\
  $D(.)$ & Distance between distributions\\
  $M(\boldsymbol{l_i})$ & Likelihood score of location $\boldsymbol{l_i}$\\
\hline\hline
\end{tabular}
\label{tab:table1}
\vspace{-15pt}
\end{table}
\endgroup

\subsection{Problem Formulation}\label{Problem Formulation}


Consider $m$ data points $\mathcal{L} = \{ \boldsymbol{l_1},\boldsymbol{l_2},..., \boldsymbol{l_m} \}$ located in a $k$-dimensional space ($\mathbb{R}^k$). Each location represents an individual. Associated attributes of data points are stored in a tabular format such that the $i^{th}$ row of the table is dedicated to $\boldsymbol{l_i}$ (\revision{as shown in Fig.~\ref{Fig: Architecture}).}

\begin{figure*}[t]
	\subfloat[Data collection]{%
	\includegraphics[scale=.7]{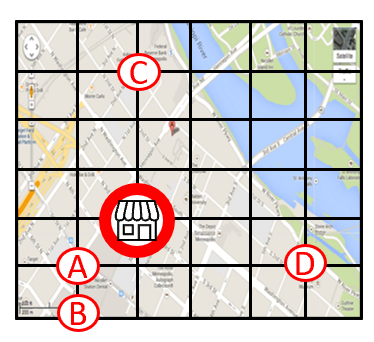}
	}
	\hfill
	\subfloat[User locations and classification scores]{%
	\includegraphics[scale=.6]{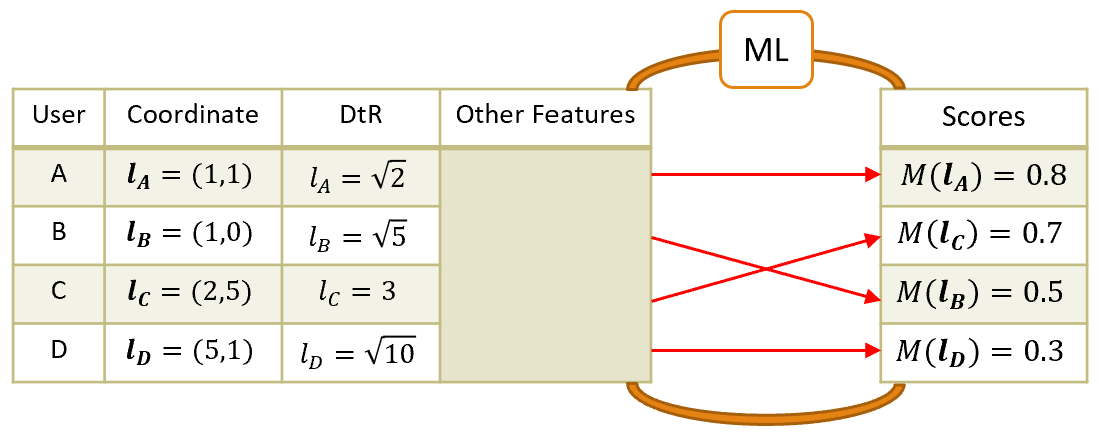}
	}
	\hfill
	\subfloat[$c$-fair polynomials]{%
	\includegraphics[scale=.7]{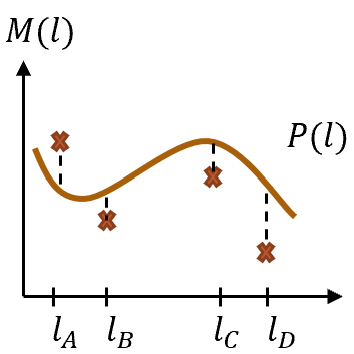}
	}
	\vspace{-10pt}
	\caption{An example of distance-based fairness problem.}
	\label{Fig: Architecture}
	\vspace{-10pt}
\end{figure*}

Attribute {\em Distance to Reference (DtR)} represents the distance from a reference point\footnote{\revision{The proposed approach can be extended to multiple reference points by using a composite cost metric, or an existing locality-sensitive hashing (LSH) approach that produces a single scalar distance value over multiple reference points.}} $\boldsymbol{R}$. We use as DtR metric the {\em Minkowski} distance of order $p$ ($p$-norm distance) defined for two data points $\boldsymbol{l_u} = (x_1,...,x_k)$ and $\boldsymbol{l_v} = (x_1',...,x_k')$ as 
\begin{equation}
    ||(\boldsymbol{l_u},\boldsymbol{l_v})||_p = \sqrt[p]{\sum_{i\Equal1}^{k}(x_i-x_i')^p}.
\end{equation}
The $u$-th entry of the DtR column is associated with datapoint $\boldsymbol{l_u}$ and entails a scalar $l_u = ||(\boldsymbol{l_u},\boldsymbol{R})||_p /\gamma$, where $\gamma =\underset{i=1...m}{max} ||(\boldsymbol{l_i},\boldsymbol{R})||_p $.  Constant $\gamma$ ensures the range of distances is [0,1] ($0\leq l_i\leq 1, \;\; \forall i\Equal 1...m$). The DtR column is shown with DtR metric set to $2$-norm. It is important to note that DtR is based on data representation, and it is not to be confused with the two distance metrics $d(.)$ and $D(.)$, which are crucial elements of individual fairness. 

\revision{As part of our system model, we assume a {\em classifier} performing decision-making on top of the data (this model aligns well with current location-based applications, where some machine learning is involved in data processing). Given an input individual $u$, the classifier $M(\boldsymbol{l_u})$ returns a likelihood score for that individual based on her location (e.g., likelihood of receiving a location advertisement). Scores are real values in the range of $0$ to $1$.}

The classifier output scores are shown in the last column of Fig.~\ref{Fig: Architecture}b.
For example, user $A$ is located at the coordinate $\boldsymbol{l_A}$ with the calculated distance from the reference point of $\sqrt{1^2+1^2}=\sqrt{2}$ normalized to $\sqrt{2}/\sqrt{10}$, and the generated score of $M(\boldsymbol{l_A})=0.8$. Other features and attributes used in the model could be race, gender, education, etc. The two problems we seek to address to achieve individual fairness are defined as follows:

\begin{problem}({\em Distance-based Fairness})\label{Problem: DtR}
For a given location dataset $\mathcal{L}$ with the corresponding DtRs $\{l_1,..., l_m\}$, and a function $M: \mathcal{L}\rightarrow [0,1]$, devise a mechanism to enforce individual distance-based fairness $(D,d)$-Lipschitz constraints with respect to DtRs. 
\begin{equation}
    D (M(\boldsymbol{l_i}),M(\boldsymbol{l_j})) \leq d (l_i,l_j) \;\;\;\forall i,j\in [1,...,m]
\end{equation}
\end{problem}

\begin{problem}({\em Zone-based Fairness})\label{Problem: locations}
For a given location dataset $\mathcal{L} = \{ \boldsymbol{l_1},\boldsymbol{l_2},..., \boldsymbol{l_m} \}$, and a function $M: \mathcal{L}\rightarrow [0,1]$, devise a mechanism to enforce individual fairness $(D,d)$-Lipschitz constraints with respect to location coordinates. 
\begin{equation}
    D (M(\boldsymbol{l_i}),M(\boldsymbol{l_j})) \leq d (\boldsymbol{l_i},\boldsymbol{l_j}) \;\;\;\forall i,j\in [1,...,m]
\end{equation}
\end{problem}

Fairness mechanisms must inherently alter the output likelihood scores in order to achieve the fairness requirement. Hence, there is a cost for such an operation in terms of utility loss. Since we expect the output of a fairness mechanism to be used for a learning task, we choose as utility metric {\em  fitting error}, a widely accepted ML metric for output scores, formally presented in Definition~\ref{def: utility}.
\begin{defn} \label{def: utility}
(Utility). Let $\mathcal{B} :\; M \rightarrow M'$ be a mechanism that maps every likelihood score $M$ to a likelihood score in $M'$ given that $M,M':\; \mathcal{L} \rightarrow \Delta(\mathcal{A})$. The fitting error (utility) of $\mathcal{B}$ is: 
\begin{equation}
    \sqrt[2]{\dfrac{1}{m}\sum_{i\Equal1}^{m}(M(\boldsymbol{l_i})-M'(\boldsymbol{l_i}))^2}
\end{equation}
\end{defn}

\revision{
\noindent As an example, suppose that the output score is mapped to a constant $0.5$, i.e., $M'(\boldsymbol{l_i}) = 0.5$, $\forall i$. The utility loss can then be calculated as $\sqrt[2]{\dfrac{1}{4} (0.3^2 + 0.2^2 +0.2^2) } \approx 0.206 $. }

\section{Distance-based Fairness} \label{Sec: Individual Fairness in DtR}
We introduce a spatial fairness mechanism for
Problem~\ref{Problem: DtR}. Each data point is augmented with a DtR column representing a user's distance from the reference point. Therefore, the most natural similarity distance metric is $1$-norm.
\begin{equation}\label{Equation: d1}
    d (l_i,l_j) = |l_i- l_j|
\end{equation}

\begin{lemma}\label{lemma simple calculation of D}
Given the classifier output space of $\mathcal{A} = \{0,1 \}$, the statistical distance for every two individuals can be calculated as
\begin{equation}
    D (M(\boldsymbol{l_i}),M(\boldsymbol{l_j}))  =  |M(l_i) - M(l_j)|
\end{equation}
\end{lemma}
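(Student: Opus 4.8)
The plan is to unfold the definition of statistical distance in Equation~\eqref{Equ: statistical distance} for the special case where the output space is binary, and then exploit the convention introduced just above the lemma, namely that $M(\boldsymbol{l_i})$ abbreviates $M(\boldsymbol{l_i}\mid a=1)$, the likelihood of the positive outcome. With only two outcomes, the distribution $M(\boldsymbol{l_i})$ is completely determined by a single scalar: writing $p = M(\boldsymbol{l_i})$ for the probability of $a=1$, the probability of $a=0$ is forced to be $1-p$ by normalization. Similarly set $q = M(\boldsymbol{l_j})$.

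First I would specialize the sum over $a \in \mathcal{A}$ to the two terms $a=0$ and $a=1$, obtaining
\begin{equation}
    D(M(\boldsymbol{l_i}),M(\boldsymbol{l_j})) = \tfrac{1}{2}\Big( |(1-p)-(1-q)| + |p-q| \Big).
\end{equation}
The crux of the argument is the elementary observation that $|(1-p)-(1-q)| = |q-p| = |p-q|$, so the two summands coincide. The factor $\tfrac{1}{2}$ then cancels the resulting doubling, leaving $D(M(\boldsymbol{l_i}),M(\boldsymbol{l_j})) = |p-q| = |M(l_i)-M(l_j)|$, which is exactly the claimed identity.

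There is no real obstacle here; the result is a routine calculation once the binary structure is used. The only point worth stating carefully is the normalization step that reduces each distribution to one free parameter, since that is what makes the two absolute-value terms identical and causes the $\tfrac{1}{2}$ prefactor to disappear. I would keep the proof to these few lines and emphasize that the simplification is a direct consequence of restricting to $\mathcal{A}=\{0,1\}$, which is why the lemma is useful as a computational shortcut in the subsequent mechanism design.
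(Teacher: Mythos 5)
Your proof is correct and follows essentially the same route as the paper's: expand the statistical-distance sum over the two outcomes, note that normalization makes $|(1-p)-(1-q)| = |p-q|$, and cancel the factor $\tfrac{1}{2}$. The paper's proof is exactly this calculation, so there is nothing to add.
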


\begin{proof}
\begin{align}
    D (M(&\boldsymbol{l_i}),M(\boldsymbol{l_j}))  = \dfrac{1}{2} \sum_{a\in \{0,1\}} |P(a) - Q(a)|\\
    & =\dfrac{1}{2} (|M(\boldsymbol{l_i}) - M(\boldsymbol{l_j})| + |1- M(\boldsymbol{l_i})- (1- M(\boldsymbol{l_j}))|)\\
    &= |M(\boldsymbol{l_i}) - M(\boldsymbol{l_j})|
\end{align}
\end{proof}


\vspace{-10pt}
\subsection{Fair-Polynomials}

Despite the strong fairness guarantees provided by individual fairness, applying a large number of hard constraints has limited its practicality. A common mechanism for individual fairness is to define an application-specific optimization problem usually referred to as {\em vendor's utility function} and solve it while imposing individual fairness hard constraints. Unfortunately, two major issues arise with such an approach when applied to location data: (i) \revision{In existing approaches, e.g.,~\cite{riederer2017price}, a constraint optimization problem solver is used to alter input locations such that fairness requirements are met. The number of constraints grows quadratically with the number of data points, which makes their enforcement computationally prohibitive}; (ii) The definition of the utility function in most scenarios is not straightforward, confining applicable use cases. 

We devise the concept of {\em fair polynomials}, the intuition behind which is depicted in Figure~\ref{Fig: Architecture}c. A polynomial is efficiently fitted to the output scores of the classifier with a reasonably low {\em fitting error}. Fair polynomials no longer require enforcement of a large number of hard constraints: given a new data point, its corresponding fair value can be generated by evaluating the polynomial at that point.

\begin{defn}[$c$-fair Polynomials]~\label{Fair Polynomials for DtR}
     A single variable degree $n$ polynomial $P(x): \mathbb{R}\rightarrow \mathbb{R}$ is said to be $c$-fair if and only if for every two points $x$ and $y$ in its domain
     \begin{equation}~\label{Equ: individual fairness}
         |P(x)- P(y)| \leq c|x-y| 
     \end{equation}
\end{defn}

Given that a fair polynomial providing good estimates of likelihood scores exists, by one-to-one mapping (fitting) of the likelihood scores to polynomials, individual distance-based fairness can be achieved for every two data points. The constant $c\in [1,+\inf ]$ in $c$-fair polynomials aims to exploit the trade-off between utility and fairness. When $c=1$, the optimal individual location fairness is achieved but is usually associated with a higher loss in utility. When the value of $c$ grows larger, the fairness constraint is relaxed, leading to higher utility but lower fairness.

As the distance-based fairness problem only involves scalars, our focus is on single variable degree $n$ polynomials. We will extend to multi-variable polynomials to accommodate for multi-dimensional data points and address Problem~\ref{Problem: locations} in Section~\ref{subsection generalized}. In the following, we answer three central questions, (i) what is the sufficient condition for a polynomial to be fair, (ii) how to derive the coefficients of the polynomial by imposing individual fairness constraints, and (iii) how to determine the degree of a fair polynomial.

\subsection{Sufficient Condition for Fair Polynomials}

There are several families of polynomials that preserve individual fairness over the defined distances for DtR. For example, one such family of polynomials is $P(x) = cx^n/n$ which is proven in Lemma~\ref{Lemma: 1} to be a $c$-fair polynomial.
\begin{lemma}\label{Lemma: 1}
The polynomial $P(x) = cx^n/n$, is a $c$-fair polynomial for every two points $x,y\in [-1,1]$.
\end{lemma}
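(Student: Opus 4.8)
The plan is to reduce the $c$-fairness inequality to a uniform bound on the derivative of $P$ over the interval $[-1,1]$ and then invoke the Mean Value Theorem. Unwinding Definition~\ref{Fair Polynomials for DtR}, I must show that $|P(x) - P(y)| \le c\,|x-y|$ for all $x,y \in [-1,1]$. Since $P(x) = cx^n/n$ is a polynomial, it is differentiable on all of $\mathbb{R}$, so the hypotheses of the MVT are met on any subinterval, and nothing stands in the way of applying it.

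First I would compute $P'(x) = c\,x^{\,n-1}$. The crucial observation is that on the restricted domain we have $|x| \le 1$, hence $|P'(x)| = c\,|x|^{\,n-1} \le c$ for every $x \in [-1,1]$. This is precisely the place where the domain restriction $x,y \in [-1,1]$ enters: outside this interval the derivative can exceed $c$ in magnitude and the bound would break, so the hypothesis is essential rather than cosmetic.

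Next, for any two points $x,y \in [-1,1]$ (the case $x=y$ being immediate), the MVT supplies a point $\xi$ lying between $x$ and $y$, and therefore also in $[-1,1]$, with $P(x) - P(y) = P'(\xi)(x-y)$. Taking absolute values and inserting the derivative bound gives $|P(x)-P(y)| = |P'(\xi)|\,|x-y| \le c\,|x-y|$, which is exactly the $c$-fairness condition, completing the argument.

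I do not expect a genuine obstacle here, as the statement is a direct Lipschitz estimate; the only point that deserves explicit mention is that the derivative bound holds \emph{only} because $|x|\le 1$ on the chosen domain. An alternative route avoids calculus entirely: factor $x^n - y^n = (x-y)\sum_{i=0}^{n-1} x^{\,i}\, y^{\,n-1-i}$, bound each of the $n$ summands by $1$ in absolute value via $|x|,|y|\le 1$, and conclude $|x^n - y^n| \le n\,|x-y|$, which after multiplying through by $c/n$ again yields the claim. I would present the MVT version as the main proof for its brevity and keep the algebraic factorization in reserve.
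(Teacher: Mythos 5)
Your argument is correct, but your main route differs from the paper's. The paper proves this lemma purely algebraically: it factors $x^n - y^n = (x-y)\bigl(x^{n-1} + x^{n-2}y + \cdots + y^{n-1}\bigr)$, bounds each of the $n$ summands by $1$ using $|x|,|y|\le 1$, and concludes $|x^n - y^n| \le n|x-y|$ via the triangle inequality --- i.e., exactly the factorization you keep ``in reserve'' (and your version of the identity, $\sum_{i=0}^{n-1} x^i y^{\,n-1-i}$, is in fact written more carefully than the paper's, which contains a typo in the exponents of the cofactor). Your MVT proof is shorter and makes the role of the domain restriction transparent through the derivative bound $|P'(x)| = c|x|^{n-1}\le c$; it is a perfectly valid Lipschitz estimate. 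What the paper's algebraic route buys, however, is reusability: the term-by-term bound $|x^i - y^i|\le i\,|x-y|$ extracted from this factorization is invoked again in the proof of Theorem~\ref{Theorem: sufficient condition 1} (and downstream in Theorem~\ref{Theorem: sufficient condition4}), where one needs the inequality monomial by monomial rather than for the whole polynomial at once, and where an MVT argument would have to be rerun with a derivative bound $\sum_i i|a_i|$ anyway. So if you intend your proof to slot into the rest of the paper, the factorization should be promoted from reserve to the record, or at least stated as a corollary.
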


\begin{proof}
The proof can be derived by expanding the equation and applying triangle inequality, considering that $|x^iy^j|\leq 1, \forall i,j$.
\begin{align}
c|&\dfrac{x^n- y^n}{n}| = c|\dfrac{(x-y)(x^{n-1}+ x^{n-1}y+...+y^{n})}{n}|\leq \\
                 &  \dfrac{(c|x-y|)(|x^{n-1}|+ |x^{n-1}y|+...+|y^{n}|)}{n}|\leq                         c|x-y|
\end{align}
\end{proof}

\vspace{-10pt}
A fair polynomial must be flexible enough to reduce the error once the likelihood scores are fitted to the polynomial, and not every fair family of polynomials is a viable option. Consider the generic degree $n$ polynomial written as
\begin{equation}
    P(x) = a_0 +a_1 x+ ... +a_n x^n,
\end{equation}
where $a_i$ are real numbers. In Theorem~\ref{Theorem: sufficient condition 1}, we derive a sufficient condition for polynomials of order $n$ to preserve individual fairness.

\begin{thm}\label{Theorem: sufficient condition 1}
A sufficient condition for a single variable degree $n$ polynomial $P(x) =  \sum_{i\Equal 0}^{n} a_i x^i$ to be $c$-fair given that $a_i\in \mathbb{R}$ and $|x|\leq 1$ is to have,
\begin{equation}\label{Equation: dto cfair}
 \sum_{i\Equal 1}^{n} i | a_i| \leq c
\end{equation}
\end{thm}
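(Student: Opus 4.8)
The plan is to recognize that the $c$-fairness inequality $|P(x)-P(y)|\le c|x-y|$ is exactly a Lipschitz condition with constant $c$ on the domain $[-1,1]$, and to reduce verifying it to controlling a single quantity: the size of the derivative. For a differentiable function, a uniform bound $|P'(\xi)|\le c$ for all $\xi\in[-1,1]$ immediately yields the Lipschitz property through the Mean Value Theorem, since for any $x,y$ there is a $\xi$ between them with $P(x)-P(y)=P'(\xi)(x-y)$, whence $|P(x)-P(y)|=|P'(\xi)|\,|x-y|\le c|x-y|$. So the whole theorem comes down to showing that the hypothesis $\sum_{i=1}^{n} i|a_i|\le c$ forces $|P'|\le c$ on the domain.

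The central estimate I would carry out is the bound on $P'$. Differentiating term by term gives $P'(x)=\sum_{i=1}^{n} i a_i x^{i-1}$, where the constant term $a_0$ drops out (explaining why the sum in the condition starts at $i=1$). Then for any $x$ with $|x|\le 1$, the triangle inequality together with $|x|^{i-1}\le 1$ gives $|P'(x)|\le \sum_{i=1}^{n} i|a_i|\,|x|^{i-1}\le \sum_{i=1}^{n} i|a_i|\le c$. The step $|x|^{i-1}\le 1$ is exactly where the hypothesis $|x|\le 1$ is used, and it must hold uniformly over the closed interval so that the same constant $c$ works for every pair of points.

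An alternative, self-contained route that avoids calculus and parallels the proof of Lemma~\ref{Lemma: 1} is to write $P(x)-P(y)=\sum_{i=1}^{n} a_i\,(x^i-y^i)$ and factor each difference as $x^i-y^i=(x-y)(x^{i-1}+x^{i-2}y+\cdots+y^{i-1})$. The second factor is a sum of $i$ monomials, each of absolute value at most $1$ when $|x|,|y|\le 1$, so $|x^i-y^i|\le i\,|x-y|$; summing yields $|P(x)-P(y)|\le |x-y|\sum_{i=1}^{n} i|a_i|\le c|x-y|$. I would probably present the Mean Value Theorem version as the main proof for brevity and mention this factorization argument as the elementary analogue.

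Honestly there is no serious obstacle here: the result is a clean sufficient condition and both proofs are short. The only point requiring mild care is ensuring the derivative bound (or the factorization bound) is uniform over the entire closed domain $[-1,1]$, and flagging clearly that the restriction $|x|\le 1$ is essential — without it the powers $|x|^{i-1}$ are unbounded and the stated coefficient sum no longer controls the Lipschitz constant.
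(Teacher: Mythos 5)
Your proposal is correct. Your primary argument via the Mean Value Theorem is a genuinely different route from the paper's: the paper argues purely algebraically, writing $P(x)-P(y)=\sum_{i=1}^{n}a_i(x^i-y^i)$, factoring $x^i-y^i=(x-y)(x^{i-1}+x^{i-2}y+\cdots+y^{i-1})$, and bounding each of the $i$ monomials in the second factor by $1$ (the same device as in Lemma~\ref{Lemma: 1}) to get $|x^i-y^i|\le i\,|x-y|$ and hence $|P(x)-P(y)|\le |x-y|\sum_{i=1}^{n}i|a_i|$ — which is exactly the ``elementary analogue'' you sketch as your alternative. Your MVT version instead reads the hypothesis as a uniform bound $\sup_{|\xi|\le 1}|P'(\xi)|\le\sum_{i=1}^{n}i|a_i|\le c$ and converts it to the Lipschitz estimate in one step; this is shorter, makes transparent that the condition is really a bound on the sup-norm of $P'$ over $[-1,1]$, and explains structurally why the factor $i$ appears (it is the derivative of $x^i$). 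What the paper's factorization buys in exchange is that it needs no calculus and is the same telescoping identity reused later for the multivariate extensions (Theorem~\ref{Theorem: sufficient condition4}), so it keeps the whole development uniform. Your closing caveat is also the right one: the restriction $|x|\le 1$ is essential in either argument, since it is what makes $|x|^{i-1}\le 1$ (respectively, each monomial in the factorization at most $1$) hold uniformly.
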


\begin{proof}
Following the definition of individual location fairness:
\begin{align}\label{Equ: sufficiency condition}
    |P(x) - &P(y)|= | \sum_{i\Equal 1}^{n} a_i (x^i- y^i)| \leq  \sum_{i\Equal 1}^{n}| a_i (x^i- y^i)|\\
    & \leq  \sum_{i\Equal 1}^{n}(|a_i(x-y)|)(|x^{i-1}|+ |x^{i-2}y|+...+|y^{i}|)\\
    & \leq  \sum_{i\Equal 1}^{n}| a_i \times i  (x- y)|= |x-y|  \sum_{i\Equal 1}^{n} i | a_i|
\end{align}
The above inequality is true based on Jensen's inequality (and also, extended triangle inequality) as well as applying the result from Lemma~\ref{Lemma: 1}. Given that the inequality in Eq.~(\ref{Equation: dto cfair}) is satisfied, the polynomial is proven to be $c$-fair based on the definition. 
\end{proof}

The theorem indicates that if likelihood scores generated by the model are fitted to a polynomial for which the coefficients are selected such that $ \sum_{i\Equal 1}^{n} i | a_i| \leq c$, then $c$-fairness is guaranteed for data entries. The sufficient condition in Theorem~\ref{Theorem: sufficient condition 1} can be used directly to learn $c$-fair polynomials, but the non-linearity existing in the constraint can result in higher computation complexity, as coefficients are unbounded. Theorem~\ref{Theorem: sufficient condition - start} addresses this problem by deriving linear constraints over coefficients.

\begin{thm}\label{Theorem: sufficient condition - start}

A sufficient condition for a $1$-variable $n$-th degree polynomial $P(x) =  \sum_{i\Equal 1}^{n} a_i x^i$ to be $c$-fair is to have: 
\begin{equation}
|a_i|\leq \dfrac{6\times i\times c}{n(n+1)(2n+1)}   \;\;\;\forall \; i\in 1...n \;\;  (a_i\in \mathbb{R})
\end{equation}
\end{thm}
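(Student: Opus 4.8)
The plan is to reduce this statement to the already-proved Theorem~\ref{Theorem: sufficient condition 1}, which guarantees that $P(x) = \sum_{i=1}^{n} a_i x^i$ is $c$-fair whenever $\sum_{i=1}^{n} i\,|a_i| \leq c$. The per-coefficient bounds asserted here are strictly stronger and more convenient (they are linear box constraints on each $a_i$ individually), so it suffices to verify that these bounds \emph{imply} the single aggregate inequality of the earlier theorem. Everything then follows by invoking that theorem.

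First I would substitute the hypothesized bound $|a_i|\leq \frac{6\,i\,c}{n(n+1)(2n+1)}$ directly into the weighted sum appearing in the aggregate condition. Since every term is nonnegative, this gives
\begin{equation}
\sum_{i=1}^{n} i\,|a_i| \leq \frac{6c}{n(n+1)(2n+1)} \sum_{i=1}^{n} i^2 .
\end{equation}
Next I would apply the classical closed form for the sum of squares, $\sum_{i=1}^{n} i^2 = \frac{n(n+1)(2n+1)}{6}$, so that the numerator and denominator cancel exactly and the right-hand side collapses to $c$. This yields $\sum_{i=1}^{n} i\,|a_i| \leq c$, which is precisely the hypothesis of Theorem~\ref{Theorem: sufficient condition 1}; applying that theorem concludes that $P$ is $c$-fair.

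The only real insight is that the stated bound is \emph{engineered} around the sum-of-squares identity: the denominator $n(n+1)(2n+1)$ together with the factor $6$ is exactly $1/\sum_{i=1}^{n} i^2$ up to the linear weight $i$, chosen so that the uniform substitution saturates the linear constraint rather than merely satisfying it loosely. There is no genuine analytic obstacle here; the hard part is simply recognizing which aggregate constraint to target and that the closed form for $\sum i^2$ makes the cancellation tight. I would also note in passing that, because the substitution saturates the bound from Theorem~\ref{Theorem: sufficient condition 1}, these box constraints are the tightest uniform family of the form $|a_i|\leq \beta\, i$ consistent with that sufficient condition, which is what makes them useful as linear constraints when fitting the $c$-fair polynomial.
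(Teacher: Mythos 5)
Your proof is correct, and it is actually more direct and more rigorous than the one in the paper. You verify sufficiency head-on: substitute the box constraints into the weighted sum, invoke $\sum_{i=1}^{n} i^2 = \tfrac{n(n+1)(2n+1)}{6}$ so that the bound collapses exactly to $\sum_{i=1}^{n} i\,\lvert a_i\rvert \leq c$, and then cite Theorem~\ref{Theorem: sufficient condition 1}. The paper instead runs the argument in the opposite direction: it poses an optimization problem (maximize coefficient freedom subject to $\sum_i i\lvert a_i\rvert \leq c$), writes a Lagrangian, and uses KKT stationarity and complementary slackness to \emph{derive} the form $\lvert a_i\rvert \leq \lambda i$ with $\lambda = \tfrac{6c}{n(n+1)(2n+1)}$. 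That derivation explains where the particular constants come from --- which your closing remark about the bounds being ``engineered'' to saturate the aggregate constraint also captures --- but as a proof of the stated sufficiency claim it is looser (the stationarity computation differentiates $i\lvert a_i\rvert$ as if it were $i\ln\lvert a_i\rvert$, and the logical step from the KKT solution back to ``therefore these bounds suffice'' is exactly the substitution you perform explicitly). In short, the paper motivates the constants; you prove the theorem. Both rest on the same two ingredients (Theorem~\ref{Theorem: sufficient condition 1} and the sum-of-squares identity), so nothing is missing from your argument.
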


\begin{proof}
The bound on each $a_i$ value must allow for the maximum degree of freedom while fitting the likelihood scores. Therefore, the condition can be written as an optimization problem.

\begin{equation}
\begin{aligned}
& \text{Minimize}
& & -\sum_{i=1}^n  a_i \\
& \text{Subject to}
& & \sum_{i=1}^n i | a_i| \leq c\\
\end{aligned}
\end{equation}

Writing the Lagrangian and applying the stationary condition of Karush–Kuhn–Tucker (KKT)~\cite{ghojogh2021kkt}, 
\begin{align}
    L(a_1,...,a_n,\lambda) &= -\sum_{i=1}^n  a_i - \lambda ( \sum_{i=1}^n i | a_i| - c)\\
    &\Rightarrow \dfrac{\partial L}{\partial a_i} = -1 - \lambda \dfrac{i}{a_i}=0 \rightarrow a_i = -\lambda i
\end{align}
$\lambda$ can be derived from complementary slackness to be
\begin{equation}
    \lambda \sum_{i=1}^n i^2 = c \rightarrow \lambda =  \dfrac{6c}{n(n+1)(2n+1)}
\end{equation}
Therefore, bounds on the coefficients are given as 
\begin{equation}
    |a_i|\leq \dfrac{6ic}{n(n+1)(2n+1)} \;\;\;\forall \; i\in 1...n
\end{equation}
\end{proof}

\subsection{Derivation of Fair Polynomials}
\revision{We employ a simple ML model to compute polynomial coefficients, where each location distance represents a training sample used to fit the likelihood scores to a polynomial. The training set can be assembled by choosing at random a number of locations from the same data domain as the application (e.g., residential locations within a city). In cases where the target user population is already known (e.g., the coordinates of customers for a store that is using location-based advertising), this user set can be directly used for training. In the following, to simplify notation, we assume the latter.} 
For a given training input $l_i$, the polynomial output is derived as
\begin{equation}
    P(l_i) = a_0 +a_1 l_i+ ... +a_n l_i^n,
\end{equation}
We denote the matrix of all training examples as 
\begin{equation}\nonumber
L=
\begin{bmatrix}
1 & l_1 & l_1^2 &... & l_1^n\\
1 & l_2 & l_2^2 &... & l_2^n\\
. & . & . &.& .\\
1 & l_n & l_n^2 &... & l_m^n
\end{bmatrix}.
\end{equation}
 Recall that $m$ is the number of training examples, and the variables that we learn are the $a_i$s, which define the fair polynomial fitted to the data. The vector of coefficients can be written as 
\begin{equation}
\boldsymbol{a}^T = [a_0, a_1,a_2,..., a_n]
\end{equation}
and the likelihood scores are vectorized as  
\begin{equation}
    \boldsymbol{b}^T = [M(l_1),M(l_2),...,M(l_m)]
\end{equation}
The convex optimization problem to learn $\boldsymbol{a}$ is formulated as: 

\begin{equation}
\begin{aligned}\label{eq: optimization}
& \text{Minimize}
& & || L\boldsymbol{a} - \boldsymbol{b} ||_2\\
& \text{Subject to}
& & |a_i|\leq \dfrac{6\times i\times c}{n(n+1)(2n+1)}\\
\end{aligned}
\end{equation}
This is equivalent to the least square problem with linear constraints and can be solved efficiently with algorithms such as Trust Region Reflective~\cite{conn2000trust} and Bounded-variable least-squares~\cite{stark1995bounded}, with complexity linear to the order of the polynomial $n$. Existing work~\cite{riederer2017price} requires enforcing $O(m^2)$ hard constraints ($m>>n$).

\revision{The selection of the polynomial degree $n$ can be conducted based on a trial and error methodology. The optimal degree is the one that results in the minimum variance of error between likelihood scores and their corresponding values on the polynomial. Formally, let $e_i$ denote the error between $M(\boldsymbol{l_i})$ and $P(l_i)$, i.e.,}
\begin{equation}
    e_i = |M(\boldsymbol{l_i})- P(l_i)|
\end{equation}
Then, the value of $n\geq 1$ is selected such that
\begin{equation}
n=argmin(\sum_{i\Equal 1}^m e_i^2)/ (m-n-1)
\end{equation}

\section{Zone-based Fairness}\label{Sec: Individual Location Fairness of Coordinates}

Revisiting the example in Figure~\ref{Fig: Architecture}, suppose that two users $A$ and $B$ both apply for a home improvement loan, and despite living in close proximity, one is categorized in an underdeveloped area and the other in a developed region due to geographic segmentation. A bank applies a classifier to decide whether an applicant should be granted a loan. The applicant whose home is in the underdeveloped category might be disadvantaged, as the location category can significantly impact the output of the classifier. Individual fairness argues that if two users are located close to each other, their output likelihood scores should not differ significantly. 

In the distance-based fairness case, a single variable $c$-fair polynomial can fit output scores due to scalar distances. For multi-dimensional data points, Definition~\ref{Fair Polynomials for DtR} is no longer directly applicable. To address this problem, we extend the definition to multivariate polynomials to achieve individual fairness for higher dimensional data points. The number of variables involved in fair polynomials is equal to the dimensionality of data points ($k$).

Three key variables are involved in finding an efficient family of fair polynomials that can fit the output likelihood scores with low utility loss: {\em (i)} dimensionality of data $k$; {\em (ii)} the distance metric $d(.)$ and {\em (iii)}  fair polynomial degree $n$. The individual fairness problem can be characterized with respect to these criteria as follows:
\revision{
\begin{itemize}
    \item One-dimensional data representation (scalars), $1$-norm distance, flexible order polynomial. This corresponds to the distance-based fairness case. 
    \item $2$-Dimensional data representation, $2$-norm distances; order 1 polynomial. This is the most common scenario for locations where the attribute columns include 2D coordinates, and the fairness must be achieved with respect to Euclidean distance between individuals. 
    \item $k$-Dimensional data representation, $2$-norm distances; order 1 polynomial.
    \item $k$-Dimensional data representation, $p$-norm distances; order 1 polynomial.
    \item $k$-Dimensional data representation, $p$-norm distances; flexible order polynomial.
\end{itemize}
}

We formulate and derive the sufficiency condition to guarantee individual fairness for each mentioned scenario. The optimization problem in Eq.~\eqref{eq: optimization} is formulated with the derived constraints. We omit the vectorization process for conciseness. For several of the proofs used in this section, we make use of Generalized Titu's Lemma provided in Lemma~\ref{Lemma: Titu}.

\begin{lemma}[\em Generalized Titu's Lemma]\label{Lemma: Titu}
Let $m$ be an integer greater than or equal to $2$, $a_i^m$ a non-negative real number, and $x_i$ a positive real number. Then,
\begin{equation}
    n^{m-2} \sum_{i\Equal 1}^{n} \dfrac{a_i^m}{x_i}\geq \dfrac{(\sum_{i\Equal1}^n a_i)^m}{\sum_{i\Equal1}^n x_i}
\end{equation}

\end{lemma}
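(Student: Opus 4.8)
The plan is to recognize this inequality as the power-mean (Hölder) generalization of the classical Cauchy--Schwarz/Titu inequality and to obtain it in a single stroke from the three-factor Hölder inequality. The case $m=2$ reduces to $n^{0}=1$ and is exactly the ordinary Titu's lemma (the Engel form of Cauchy--Schwarz), so the real content of the statement is the bookkeeping of exponents that produces the factor $n^{m-2}$.

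First I would factor the summand $a_i$ as a product of three pieces whose ``energies'' are precisely the three sums appearing in the statement. Since $x_i>0$ and $a_i\ge 0$, I can write
\begin{equation}
a_i = \Bigl(\frac{a_i^m}{x_i}\Bigr)^{1/m}\cdot x_i^{1/m}\cdot 1 .
\end{equation}
Summing over $i$ and applying the three-term Hölder inequality with exponents $p_1=p_2=m$ and $p_3=\frac{m}{m-2}$ --- which satisfy $\frac{1}{p_1}+\frac{1}{p_2}+\frac{1}{p_3}=1$, with each $p_j\ge 1$ precisely because $m\ge 2$ --- gives
\begin{equation}
\sum_{i=1}^n a_i \le \Bigl(\sum_{i=1}^n \frac{a_i^m}{x_i}\Bigr)^{1/m}\Bigl(\sum_{i=1}^n x_i\Bigr)^{1/m}\, n^{(m-2)/m},
\end{equation}
where the last factor comes from $\bigl(\sum_{i=1}^n 1^{p_3}\bigr)^{1/p_3}=n^{(m-2)/m}$. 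I would then raise both sides to the power $m$ and divide by $\sum_i x_i>0$ (which preserves the inequality direction), obtaining $n^{m-2}\sum_i a_i^m/x_i \ge (\sum_i a_i)^m/\sum_i x_i$, exactly as claimed.

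I do not anticipate a genuine obstacle: once the factorization is written down the result is immediate, and the only point needing care is verifying that the exponent choice $p_3=\frac{m}{m-2}$ is admissible for every integer $m\ge 2$, with the boundary case $m=2$ corresponding to the degenerate third factor $w_i\equiv 1$ and recovering plain Titu. If a self-contained argument avoiding the generalized Hölder inequality were required, an alternative would be induction on $n$ starting from the two-variable case, but that route is considerably more tedious and I would only fall back on it as a last resort.
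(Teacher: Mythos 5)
Your proposal is correct and follows essentially the same route as the paper's proof: both apply the three-factor H\"older inequality to the factorization $a_i = \bigl(a_i^m/x_i\bigr)^{1/m}\cdot x_i^{1/m}\cdot 1$ with exponent weights $\tfrac{1}{m},\tfrac{1}{m},\tfrac{m-2}{m}$, then raise to the $m$-th power and divide by $\sum_i x_i$. Your added remark about the admissibility of $p_3=\tfrac{m}{m-2}$ for $m\ge 2$ (with the degenerate case $m=2$ recovering plain Titu) is a small but welcome point of care that the paper leaves implicit.
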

\begin{proof}
Proof is in Appendix~\ref{Section: appendix} of our extended version\footnote{ https://github.com/SinaShaham/c-Fair-Polynomials/tree/main}.
\end{proof}

\subsection{$2$-norm, $2$ dimensional data, Order $1$ polynomial}\label{subsection generalized}

For higher dimensional data, the most common scenario happens when data points are in 2D, and the order of the polynomial is one. In practice, data points represent coordinates of locations on the map. Consider two locations $\boldsymbol{l_1} = (x_1,x_2)$ and $\boldsymbol{l_2} = (x_1',x_2')$ in $\mathbb{R}^2$, where $x_1$ and $x_1'$ are the $x$-axis coordinates, while $x_2$ and $x_2'$ denote $y$-axis coordinates. To achieve individual fairness with respect to locations, the hard Lipschitz constraints dictate that:

\begin{equation}
    D (M(\boldsymbol{l_i}),M(\boldsymbol{l_j})) \leq d (\boldsymbol{l_i},\boldsymbol{l_j})\;\; \forall i,j \in 1...m 
\end{equation}

The distance between distribution scores, i.e., $D(.)$, is calculated as before based on Equation (\ref{lemma simple calculation of D}) and the distance between locations is the $2$-norm of data points (Euclidean distance), calculated as:

\begin{equation}
    d (\boldsymbol{l_i},\boldsymbol{l_j}) = \sqrt[2]{(x_1-x_1')^2+ (x_2-x_2')^2}
\end{equation}

We start by showing how a fair-polynomial can be derived for the Euclidean similarity distance. Then, we relax the assumptions and generalize the approach for arbitrary distance norms as well as $n$-dimensional data points. Recall that as location data are stored in $2$D, the fair polynomial consists of two variables. The generalized definition of fair polynomials for order $n$ polynomials and $k$ dimensional data is provided in Definition~\ref{Def: generalized fair polynomials}.

\begin{defn}[Generalized $c$-Fair Polynomial] \label{Def: generalized fair polynomials}
     The polynomial $P(x_1,x_2,...,x_k): \mathbb{R}^k\rightarrow \mathbb{R}$ with real coefficients is $c$-fair iff for every two points $x = (x_1,x_2,...,x_m)$ and $x' =(x_1',x_2',...,x_m') $ in its domain
     \begin{align}~\label{Equ: individual fairness multi-variable}
         |P(x_1,x_2,...,x_m)- P(x_1',x_2',&...,x_m')| \leq \\&c \times d(x,x') = c \times|| (x,x') ||_p\nonumber
     \end{align}
\end{defn}

In the case of $2$-dimensional locations and Euclidean distance, fair polynomials imply that for every two locations $\boldsymbol{l_1} = (x_1,x_2)$ and $\boldsymbol{l_2} = (x_1',x_2')$, we must have,

\begin{equation}
|P(x_1,x_2)- P(x_1',x_2')| \leq c \times \sqrt[2]{(x_1-x_1')^2+ (x_2-x_2')^2}
\end{equation}
Where the polynomial is denoted by
\begin{equation}
    P(x_1,x_2) = a_0 + a_1 x_1 + a_2 x_2
\end{equation}
The goal is to learn the coefficients $a_i$ such that the polynomial $P(.)$ can model the output scores $M(.)$ and preserve fairness with respect to Euclidean distance. Theorem~\ref{Theorem: sufficient condition2} provides the sufficiency condition for a two-variables order one polynomial to be fair.

\begin{thm}\label{Theorem: sufficient condition2}

A sufficient condition for a $2$-variable first degree polynomial $P(x_1,x_2)= a_0 + a_1 x_1 + a_2 x_2$ to be $c$-fair defined over $2$-norm similarity distance is to have: 
\begin{equation}\label{Equation: mem}
| a_1|,|a_2| \leq c/\sqrt{2}    \;\;\;\;  (a_1\, a_2\in \mathbb{R})
\end{equation}
\end{thm}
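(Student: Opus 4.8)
The plan is to reduce the claim to a statement about the linear part of $P$, since the constant term $a_0$ contributes nothing to a difference. Writing $u = x_1 - x_1'$ and $v = x_2 - x_2'$, I observe that
\[
P(x_1,x_2) - P(x_1',x_2') = a_1 u + a_2 v,
\]
so the $c$-fairness requirement of Definition~\ref{Def: generalized fair polynomials} becomes exactly the inequality $|a_1 u + a_2 v| \le c\sqrt{u^2 + v^2}$, which must hold for all real $u,v$.

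First I would bound the left-hand side by Cauchy--Schwarz, which gives $|a_1 u + a_2 v| \le \sqrt{a_1^2 + a_2^2}\,\sqrt{u^2 + v^2}$. The problem then collapses to showing that the assumed coefficient bounds force $\sqrt{a_1^2 + a_2^2} \le c$. This is immediate: if $|a_1|, |a_2| \le c/\sqrt 2$ then $a_1^2 + a_2^2 \le c^2/2 + c^2/2 = c^2$, and combining the two displays yields the desired Lipschitz bound.

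An alternative route, which avoids Cauchy--Schwarz and parallels the triangle-inequality style used in Theorem~\ref{Theorem: sufficient condition 1}, is to first apply the triangle inequality, $|a_1 u + a_2 v| \le |a_1||u| + |a_2||v| \le (c/\sqrt 2)(|u| + |v|)$, and then invoke the elementary norm comparison $|u| + |v| \le \sqrt{2}\,\sqrt{u^2 + v^2}$. The latter follows from squaring and reducing to $2|u||v| \le u^2 + v^2$, i.e.\ $(|u| - |v|)^2 \ge 0$. Either way, the factor $\sqrt 2$ appearing in the coefficient bound is precisely the constant in the $\ell_1$-to-$\ell_2$ norm equivalence in two dimensions, which also explains why the bound is sharp (equality in the worst case $|u| = |v|$, $|a_1| = |a_2| = c/\sqrt 2$).

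There is no serious obstacle here: the content is essentially a one-line inequality. The only point requiring care is that the bound must hold for all $u,v$ simultaneously, with no restriction on their signs — both the triangle inequality and Cauchy--Schwarz handle this automatically, so I would present the estimate in a sign-free form from the outset. The genuine difficulty is deferred to the later, fully general theorems ($k$-dimensional data and $p$-norm distances), where the clean $\sqrt 2$ factor is replaced by a dimension- and norm-dependent constant and Generalized Titu's Lemma (Lemma~\ref{Lemma: Titu}) becomes necessary.
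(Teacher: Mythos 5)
Your proof is correct, and your second route (triangle inequality followed by the norm comparison $|u|+|v|\le\sqrt{2}\sqrt{u^2+v^2}$) is essentially the paper's own argument: the paper obtains exactly that $\ell_1$-to-$\ell_2$ comparison as a lower bound on the Euclidean distance by invoking the Generalized Titu's Lemma (Lemma~\ref{Lemma: Titu}), whereas you derive it from the elementary identity $(|u|-|v|)^2\ge 0$ --- a welcome simplification, since Titu's lemma is overkill in two dimensions. Your first route via Cauchy--Schwarz is genuinely different and in fact proves more: it shows that the weaker condition $a_1^2+a_2^2\le c^2$ already suffices for $c$-fairness, of which the stated box constraint $|a_1|,|a_2|\le c/\sqrt{2}$ is merely the largest inscribed axis-aligned box (convenient for the linearly constrained least-squares formulation in Eq.~\eqref{eq: optimization}, which is presumably why the paper states it that way). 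Your sharpness remark is also correct: equality holds at $|u|=|v|$ with $|a_1|=|a_2|=c/\sqrt{2}$, so the constant cannot be improved within the box-constraint formulation. No gaps.
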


\begin{proof}
On the one hand, based on Lemma~\ref{Lemma: Titu}, a lower bound for Euclidean distances can be written as
\begin{align}
    d (\boldsymbol{l_i},\boldsymbol{l_j})= &\sqrt[2]{(x_1-x_1')^2+ (x_2-x_2')^2}\geq \\
    &\sqrt[2]{(|x_1 - x_1'|+ |x_2-x_2'|)^2/2}
\end{align}
On the other hand, for the polynomial one can write

\begin{align}
    |P(x_1,x_2)- P(x_1',x_2')| &= |a_1 (x_1 - x_1')+ a_2(x_2 - x_2')|\\
    &\leq |a_1| |(x_1 - x_1')|+ |a_2||(x_2 - x_2')|
\end{align}
By combining the two equations the sufficiency condition in Equation (\ref{Equation: mem}) can be derived from the following inequality
\begin{align}
     |a_1| |(x_1 - x_1')|+ &|a_2||(x_2 - x_2')|\leq \\ &c \times (|x_1 - x_1'|+ |x_2-x_2'|)/\sqrt{2}\nonumber
\end{align}
\end{proof}

The above theorem indicates that if the coefficients of polynomials fitted to data are chosen such that $| a_1|,|a_2| \leq c/\sqrt{2} $, fairness is guaranteed for every two locations in the domain. The sufficiency condition for first degree polynomials is generalized for $k$-dimensional data points in space in Theorem~\ref{Theorem: sufficient condition - gen} (the number of variables in the polynomial is equal to the number of dimensions).

\begin{thm}\label{Theorem: sufficient condition - gen}
A sufficient condition for a $k$-variable first degree polynomial $P(x_1,...,x_k)= a_0 +\sum_{i\Equal 1}^k a_i x_i $ defined over $2$-norm similarity distance to be $c$-fair is: 
\begin{equation}
| a_i| \leq c/\sqrt[2]{k},\;\;\;\; \forall i=1...k,  \;\;\;\; (a_i\in \mathbb{R})
\end{equation}
\end{thm}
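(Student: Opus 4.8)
The plan is to mirror the two-variable argument of Theorem~\ref{Theorem: sufficient condition2}, replacing its ad hoc two-term estimate with the full-strength Generalized Titu's Lemma (Lemma~\ref{Lemma: Titu}). The goal is to sandwich the polynomial difference between an upper bound obtained from the triangle inequality together with the coefficient constraints, and a lower bound on the Euclidean distance obtained from Titu's Lemma, so that the two meet at the common quantity $\frac{c}{\sqrt{k}}\sum_{i=1}^k |x_i - x_i'|$.

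First I would lower-bound the similarity distance. Writing $u_i = |x_i - x_i'|$ for each coordinate, the $2$-norm distance is $d(\boldsymbol{l_i},\boldsymbol{l_j}) = \sqrt{\sum_{i=1}^k u_i^2}$. Applying Lemma~\ref{Lemma: Titu} with $m=2$ and all denominators set equal to $1$ (so that $\sum x_i = k$ and the prefactor collapses, $k^{m-2} = k^0 = 1$) yields $\sum_{i=1}^k u_i^2 \geq (\sum_{i=1}^k u_i)^2 / k$. Taking square roots gives the key inequality $d(\boldsymbol{l_i},\boldsymbol{l_j}) \geq \frac{1}{\sqrt{k}} \sum_{i=1}^k |x_i - x_i'|$, which is exactly the $k$-term analogue of the lower bound used in the proof of Theorem~\ref{Theorem: sufficient condition2}.

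Next I would upper-bound the polynomial difference. Since the constant term cancels, $|P(x_1,\dots,x_k) - P(x_1',\dots,x_k')| = |\sum_{i=1}^k a_i (x_i - x_i')|$, which by the triangle inequality is at most $\sum_{i=1}^k |a_i|\,|x_i - x_i'|$. Substituting the hypothesized bound $|a_i| \leq c/\sqrt{k}$ turns this into $\frac{c}{\sqrt{k}} \sum_{i=1}^k |x_i - x_i'|$. Chaining the two estimates then gives $|P(x) - P(x')| \leq \frac{c}{\sqrt{k}} \sum_i |x_i - x_i'| \leq c\cdot d(\boldsymbol{l_i},\boldsymbol{l_j})$, which is precisely Definition~\ref{Def: generalized fair polynomials} specialized to $p=2$.

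There is no genuine obstacle here, since the argument is a clean generalization of the $k=2$ case; the only points requiring care are matching the variable names in Lemma~\ref{Lemma: Titu} correctly so that the $k^{m-2}$ factor collapses to $1$, and observing that equality in Titu's step occurs when all $u_i$ are equal, which confirms that $c/\sqrt{k}$ is the largest symmetric per-coordinate bound for which the chaining still closes.
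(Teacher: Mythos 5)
Your proof is correct and follows essentially the same route as the paper's: a lower bound on the Euclidean distance via Generalized Titu's Lemma with $m=2$, giving $d(\boldsymbol{l_i},\boldsymbol{l_j})\geq \frac{1}{\sqrt{k}}\sum_{i=1}^k|x_i-x_i'|$, combined with the triangle-inequality upper bound $|P(x)-P(x')|\leq\sum_{i=1}^k|a_i|\,|x_i-x_i'|$ and the coefficient constraint. Your added remark on the equality case of Titu's step is a small bonus but does not change the argument.
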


\begin{proof}
Please see proof in Appendix~\ref{Section: appendix} of our extended version.
\end{proof}

\subsection{$p$-norm, $k$ dimensional, Order $1$ polynomial}


We relax the similarity metric for arbitrary $p$-norm distance, calculated for data points $\boldsymbol{l_i} = (x_1,...,x_k)$ and $\boldsymbol{l_j} = (x_1',...,x_k')$ as
\begin{equation}
    d (\boldsymbol{l_i},\boldsymbol{l_j}) = \sqrt[p]{\sum_{q\Equal1}^{k}(x_q-x_q')^p}
\end{equation}

\begin{thm}
A sufficient condition for a $k$-variable first degree polynomial $P(x_1,...,x_k)= a_0 +\sum_{i\Equal 1}^k a_i x_i $  defined over $p$-norm similarity distance to be $c$-fair is: 
\begin{equation}
| a_i| \leq c/\sqrt[p]{k^{p-1}},\;\;\;\; \forall i=1...k 
\end{equation}
\end{thm}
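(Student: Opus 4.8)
The plan is to follow the same two-sided estimate used in Theorem~\ref{Theorem: sufficient condition - gen}, bounding the polynomial difference and the $p$-norm distance separately and then chaining the two. Writing $u_q = |x_q - x_q'|$ for the per-coordinate gaps, I would first upper-bound the left-hand side $|P(x_1,\dots,x_k) - P(x_1',\dots,x_k')|$ by a multiple of the $\ell_1$ sum $\sum_q u_q$, then lower-bound the right-hand side $d(\boldsymbol{l_i},\boldsymbol{l_j}) = \sqrt[p]{\sum_q u_q^p}$ by the same $\ell_1$ sum, so that the proposed coefficient bound makes the two meet at $c\,d(\boldsymbol{l_i},\boldsymbol{l_j})$.

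For the polynomial side the argument is routine: since $P$ is affine the constant $a_0$ cancels, and one application of the triangle inequality gives $|P(x) - P(x')| = |\sum_i a_i (x_i - x_i')| \le \sum_i |a_i|\, u_i$. Substituting the claimed bound $|a_i| \le c/\sqrt[p]{k^{p-1}}$ turns this into $(c/\sqrt[p]{k^{p-1}})\sum_i u_i$, reducing everything to controlling the $\ell_1$ sum $\sum_q u_q$ by the $\ell_p$ norm $\sqrt[p]{\sum_q u_q^p}$.

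The heart of the argument, and the step I expect to be the main obstacle, is establishing the reverse power-mean inequality $\sum_q u_q \le \sqrt[p]{k^{p-1}}\cdot\sqrt[p]{\sum_q u_q^p}$, equivalently $k^{p-1}\sum_q u_q^p \ge (\sum_q u_q)^p$. This is exactly Generalized Titu's Lemma (Lemma~\ref{Lemma: Titu}) specialized with $m = p$, $n = k$, $a_i = u_i$, and all denominators $x_i = 1$: the lemma then reads $k^{p-2}\sum_q u_q^p \ge (\sum_q u_q)^p / k$, which rearranges to precisely the bound needed. The care point here is that Lemma~\ref{Lemma: Titu} is stated for integer $m \ge 2$, so for a non-integer exponent $p$ one must instead invoke the general power-mean (equivalently H\"older) inequality, which delivers the same estimate $||u||_1 \le k^{(p-1)/p}\,||u||_p$ for every real $p \ge 1$.

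Chaining the two bounds then closes the proof: $|P(x)-P(x')| \le (c/\sqrt[p]{k^{p-1}})\sum_i u_i \le (c/\sqrt[p]{k^{p-1}})\cdot\sqrt[p]{k^{p-1}}\cdot d(\boldsymbol{l_i},\boldsymbol{l_j}) = c\,d(\boldsymbol{l_i},\boldsymbol{l_j})$, which is the $c$-fair condition of Definition~\ref{Def: generalized fair polynomials}. As a sanity check, this recovers Theorem~\ref{Theorem: sufficient condition - gen} as the special case $p=2$, since there $\sqrt[p]{k^{p-1}} = \sqrt{k}$.
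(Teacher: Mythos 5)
Your proposal is correct and follows essentially the same route as the paper: bound the polynomial difference by the $\ell_1$ sum of coordinate gaps via the triangle inequality, lower-bound the $p$-norm distance by that same $\ell_1$ sum using Generalized Titu's Lemma with unit denominators, and chain the two. Your added observation that the lemma as stated requires integer $m\geq 2$, so that non-integer $p$ needs the power-mean/H\"older form of the same estimate, is a valid refinement the paper glosses over, but it does not change the argument.
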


\begin{proof}
Based on generalized Titu's Lemma, we have on the one hand a lower bound for Euclidean distances:
\begin{align}\label{Equ: p-norm inequality}
    d (\boldsymbol{l_i},\boldsymbol{l_j})= &\sqrt[p]{\sum_{q\Equal1}^{k}(x_q-x_q')^p}\geq \\
    &\sqrt[p]{(\sum_{q\Equal1}^{k}|x_q-x_q'|)^p/k^{p-1}} = (\sum_{q\Equal1}^{k}|x_q-x_q'|)/\sqrt[p]{k^{p-1}}\nonumber
\end{align}
On the other hand, for the polynomial one can write \begin{align}
    |P(x_1,...,x_k)- P(x_1',...,x_k')| &= |\sum_{q\Equal1}^{k} a_q (x_q - x_q') |\\
    &\leq \sum_{q\Equal1}^{k} |a_q| |(x_q - x_q')|
\end{align}
Combining the two, we obtain:
\begin{equation}
     \sum_{q\Equal1}^{k} |a_q| |(x_q - x_q')| \leq \sum_{q\Equal1}^{k}|x_q-x_q'|/\sqrt[p]{k^{p-1}}
\end{equation}
The inequality is satisfied when $|a_q|\leq 1/\sqrt[p]{k^{p-1}}$.
\end{proof}

\subsection{$p$-norm, $k$ dimensional, Order $n$ polynomial}

So far, the sufficiency condition for $c$-fair polynomials was derived for arbitrary norms in $k$-dimensional space based on order $1$ polynomials. Moreover, for distance-based fairness, $c$-fair polynomials were developed for $1$-dimensional distance using arbitrary degree $n$ polynomial.  This subsection provides the theoretical background for the generalized scenario in which the location data are in $k$ dimensions with the norm set to $p$, and degree $n$ polynomials. 

Although by increasing the degree of polynomials, a better fit to likelihood scores can be achieved, the existence of monomials in which multiple variables are involved leads to complexity in the derivation of sufficiency conditions. To address this, we assume that the monomials in the multivariable polynomial consist of only a single variable. Making such an assumption comes with the cost of utility loss; however, it greatly reduces the complexity of the generic case. We assume that the degree $n$ polynomial is expressed as the summation of $k$ univariate polynomials. 
\begin{equation}\label{equ: components}
    P(x_1,x_2,...,x_k) = \sum_{i\Equal 1}^k P_i(x_i),
\end{equation}
where $P_i(x_i) = \sum_{j\Equal 1}^{n} a_{ij} x_i^j$ is a degree $n$ univariate polynomial with its input being $x_i$, the $i$th variable in the original polynomial. The assumption helps to remove existence of monomials with multiple variables such as $x_i^3x_j^2x_k^3$ and to simplify derivation of location fairness sufficiency conditions provided in Theorem~\ref{Theorem: sufficient condition4}.

\begin{thm}\label{Theorem: sufficient condition4}
A sufficient condition for a $k$-variable $n$-th degree polynomial $P(x_1,x_2,...,x_k) = \sum_{i\Equal 1}^k P_i(x_i)$ to be $c$-fair defined over $p$-norm similarity distance is to have: 

\begin{equation}
|a_{ij}|\leq \dfrac{6\times j\times c\sqrt[p]{k^{p-1}}}{n(n+1)(2n+1)},\;\;\;\; \forall i=1...k\; \& \; \forall j=1...n
\end{equation}
\end{thm}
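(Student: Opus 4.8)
The plan is to reduce the multivariate, degree-$n$, $p$-norm case to ingredients already established: the univariate Lipschitz estimate inside Theorem~\ref{Theorem: sufficient condition 1}, the KKT argument of Theorem~\ref{Theorem: sufficient condition - start}, and the $p$-norm lower bound of Eq.~\eqref{Equ: p-norm inequality}. The crucial simplification is the separable form imposed in Eq.~\eqref{equ: components}, namely $P(x_1,\dots,x_k)=\sum_{i\Equal 1}^k P_i(x_i)$ with each $P_i$ univariate of degree $n$. First I would apply the triangle inequality across the $k$ summands to get $|P(x)-P(x')|\le \sum_{i\Equal 1}^k |P_i(x_i)-P_i(x_i')|$, so the multivariate fairness condition decouples into $k$ one-dimensional problems, one per coordinate.

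Next I would bound each univariate term exactly as in the proof of Theorem~\ref{Theorem: sufficient condition 1}: for each $i$, $|P_i(x_i)-P_i(x_i')|\le |x_i-x_i'|\sum_{j\Equal 1}^n j|a_{ij}|$, valid since $|x_i|\le 1$. Summing over $i$ yields $|P(x)-P(x')|\le \sum_{i\Equal 1}^k |x_i-x_i'|\big(\sum_{j\Equal 1}^n j|a_{ij}|\big)$. The idea is then to impose on every variable the single aggregate constraint $\sum_{j\Equal 1}^n j|a_{ij}|\le c'$, for a constant $c'$ fixed in the final step, which collapses the right-hand side to the $1$-norm form $c'\sum_{i\Equal 1}^k |x_i-x_i'|$.

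To convert the aggregate constraint into the per-coefficient bound claimed in the statement, I would reuse the Lagrangian/KKT optimization of Theorem~\ref{Theorem: sufficient condition - start}: maximizing the fitting freedom subject to $\sum_{j} j|a_{ij}|\le c'$ gives $a_{ij}\propto j$ with multiplier $\lambda=6c'/[n(n+1)(2n+1)]$, hence $|a_{ij}|\le 6jc'/[n(n+1)(2n+1)]$. Finally I would invoke the $p$-norm lower bound Eq.~\eqref{Equ: p-norm inequality}, rewritten as $\sum_{i\Equal 1}^k|x_i-x_i'|\le \sqrt[p]{k^{p-1}}\; d(\boldsymbol{l_i},\boldsymbol{l_j})$, to turn the $1$-norm right-hand side into the required $c\cdot d$ bound, and choose $c'$ so that $c'\sqrt[p]{k^{p-1}}$ matches $c$. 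The main obstacle I anticipate is precisely the bookkeeping of the factor $\sqrt[p]{k^{p-1}}$ through this last substitution: one must track whether it enters multiplicatively or as a divisor when combined with $6jc'/[n(n+1)(2n+1)]$, and verify that the resulting per-coefficient bound reproduces exactly the constant in the theorem statement. Once that factor is placed correctly, the remainder is a routine assembly of the three prior results.
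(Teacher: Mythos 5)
Your proposal follows essentially the same route as the paper's own proof: triangle inequality over the separable components, the univariate bound $|P_i(x_i)-P_i(x_i')|\le |x_i-x_i'|\sum_j j|a_{ij}|$ from Theorem~\ref{Theorem: sufficient condition 1}, the $p$-norm lower bound of Eq.~\eqref{Equ: p-norm inequality}, and the KKT linearization of Theorem~\ref{Theorem: sufficient condition - start}, so the argument is correct and complete once assembled. Regarding the one obstacle you flag: carrying the bookkeeping through forces $c'=c/\sqrt[p]{k^{p-1}}$ and hence $|a_{ij}|\le 6jc/[n(n+1)(2n+1)\sqrt[p]{k^{p-1}}]$, i.e., the factor $\sqrt[p]{k^{p-1}}$ belongs in the \emph{denominator} --- this matches the final line of the paper's own proof but not the theorem statement as printed (where it appears in the numerator, giving a weaker bound that the argument does not justify), so your instinct to scrutinize that constant was well placed.
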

\begin{proof}
We write the equation in its component form shown in Equation~\ref{equ: components}. An upper bound for $P_i(x_i)$ can be derived as, 
\begin{align}
    |P(x_1,...&,x_k)- P(x_1',...,x_k')| = |\sum_{i\Equal 1}^k P_i(x_i) - \sum_{i\Equal 1}^k P_i(x_i') |\\
    &= |\sum_{i\Equal 1}^k (P_i(x_i) - P_i(x_i')) |
    \leq \sum_{i\Equal 1}^k | P_i(x_i) - P_i(x_i') |
\end{align}
An upper bound for the sub-terms for all $j=1...k$ can be derived as

\begin{align}
    |P_i(x_i) - &P_i(x_i')|= | \sum_{j\Equal 1}^{n} a_{ij} (x_i^j- x_i'^j)| \leq |x_i- x_i'|  \sum_{j\Equal 1}^{n} j | a_{ij}|
\end{align}
The above inequality is derived based on Eq.~(\ref{Equ: sufficiency condition}) setting $|x_i|\leq 1$. We also use the lower bound derived in Eq.~(\ref{Equ: p-norm inequality})
\begin{equation}
   c\times d (\boldsymbol{l_u},\boldsymbol{l_v}) \geq c\sum_{q\Equal1}^{k}|x_q-x_q'|/\sqrt[p]{k^{p-1}}
\end{equation}
Putting the derived upper bound and lower bound together, the following inequality is satisfied, and $c$-fairness is guaranteed. 
\begin{align}
    |x_i- x_i'|  &(\sum_{j\Equal 1}^{n} j | a_{ij}|) \leq c|x_j-x_j'|/\sqrt[p]{k^{p-1}}\\
    &\sum_{j\Equal 1}^{n} j | a_{ij}| \leq c/\sqrt[p]{k^{p-1}}
\end{align}
By applying the method used in DtR, the bounds are linearized to,
\begin{equation}
|a_{ij}|\leq \dfrac{6\times j\times c}{n(n+1)(2n+1)\sqrt[p]{k^{p-1}}}   \;\;\;\forall \; i\in 1...n \;\;  (a_i\in \mathbb{R})
\end{equation}

\end{proof}

It is worth noting that the generated scores by fair polynomials can result in values greater than one or less than zero. In such scenarios, the values are suppressed to one and zero, respectively. It is straightforward to prove that the suppression process does not violate the individual fairness constraints and leads to higher utility.

\section{Related Work}\label{Sec: Related Work}


Fairness notions can be grouped into two broad categories of Group Fairness and Individual Fairness~\cite{dwork2012fairness}. In group fairness, a protected attribute of the dataset, such as race or gender, which is considered to be critical in decision-making outcomes, partitions individuals into groups. The ML model used for a decision-making task on the dataset is considered to be fair if it achieves some statistical measure across groups. A few of the key statistical measures include statistical parity~\cite{kusner2017counterfactual}\cite{dwork2012fairness}, equalized odds~\cite{hardt2016equality}, treatment equality~\cite{berk2021fairness}, and test fairness~\cite{chouldechova2017fair}. Individual fairness aims to give similar predictions to similar outcomes, focusing on fairness for individuals as opposed to groups. Group fairness notions are generally weaker than individual fairness notions~\cite{kim2019preference}. Despite higher fairness guarantees provided by individual fairness and fragility of group fairness notions, group fairness notions are widely studied in the literature due to their easier enforcement~\cite{mehrabi2021survey}. Only a handful of approaches exist in the literature to achieve fairness in the geospatial domain.

The current state-of-the-art approach to enforce individual location fairness is to define a linear loss function once the likelihood scores are generated and solve optimization under individual fairness Lipschitz constraints. Let $I$ be an instance of our problem consisting of a metric $d:\; \mathcal{L}\times \mathcal{L} \rightarrow \mathbb{R}$, and a loss function $L:\; \mathcal{L}\times A\rightarrow \mathbb{R}$, the optimization problem is defined as,

\begin{align}
    opt(I) = & \underset{\{M(x) \}_{x\in L}}{min}\;\; \underset{x\sim L}{\boldsymbol{E}} \;\; \underset{a\sim M(x)}{\boldsymbol{E}} \;\; L(x,a)\\
    & \text{Subject to:  } \forall x,y\in : \;\; D(M(x),M(y))\leq d(x,y)\\
    &\forall x\in L: \;\; M(x)\in \Delta (A)
\end{align}

One can see that the number of constraints in this mechanism grows quadratically with the number of individuals, imposing a large computational complexity on the system. The authors in~\cite{riederer2017price} formulate the loss function for location-based advertisements in social media. Locations visited on the map are shown as binary strings, and a classifier is used to predict whether a user should receive a targeted advertisement. Moreover, not directly related to locations, but for general purpose advertisement and auctions, individual fairness is applied in~\cite{dwork2018individual}. Another application over which the loss function has been defined is achieving individual fairness in ranking and recommendation systems~\cite{pitoura2021fairness}. In ranking systems, the amount of unfairness with respect to individuals is measured after ranking, and a loss function aims to reorder ranking such that the amount of individual unfairness is minimized~\cite{biega2018equity}.

Several attempts have also been made to apply the individual fairness notion for clustering datapoints in Cartesian space. The notion in~\cite{kleindessner2020notion} defines clustering conducted for a point in space as fair if the average distance to the points in its own cluster is not greater than the average distance to the points in any other cluster. The authors in~\cite{mahabadi2020individual} focus on defining individual fairness for $k$-median and $k$-means algorithms. Clustering is defined to be individually fair if every point expects to have a cluster center within a particular radius. To the best of our knowledge, no work has directly defined individual fairness with respect to locations.

\begin{figure}[!ht]
\centering
	\subfloat[Percentage of Unfairness versus $c$.\label{fig: NewYork1}]{%
	\includegraphics[scale = 0.45]{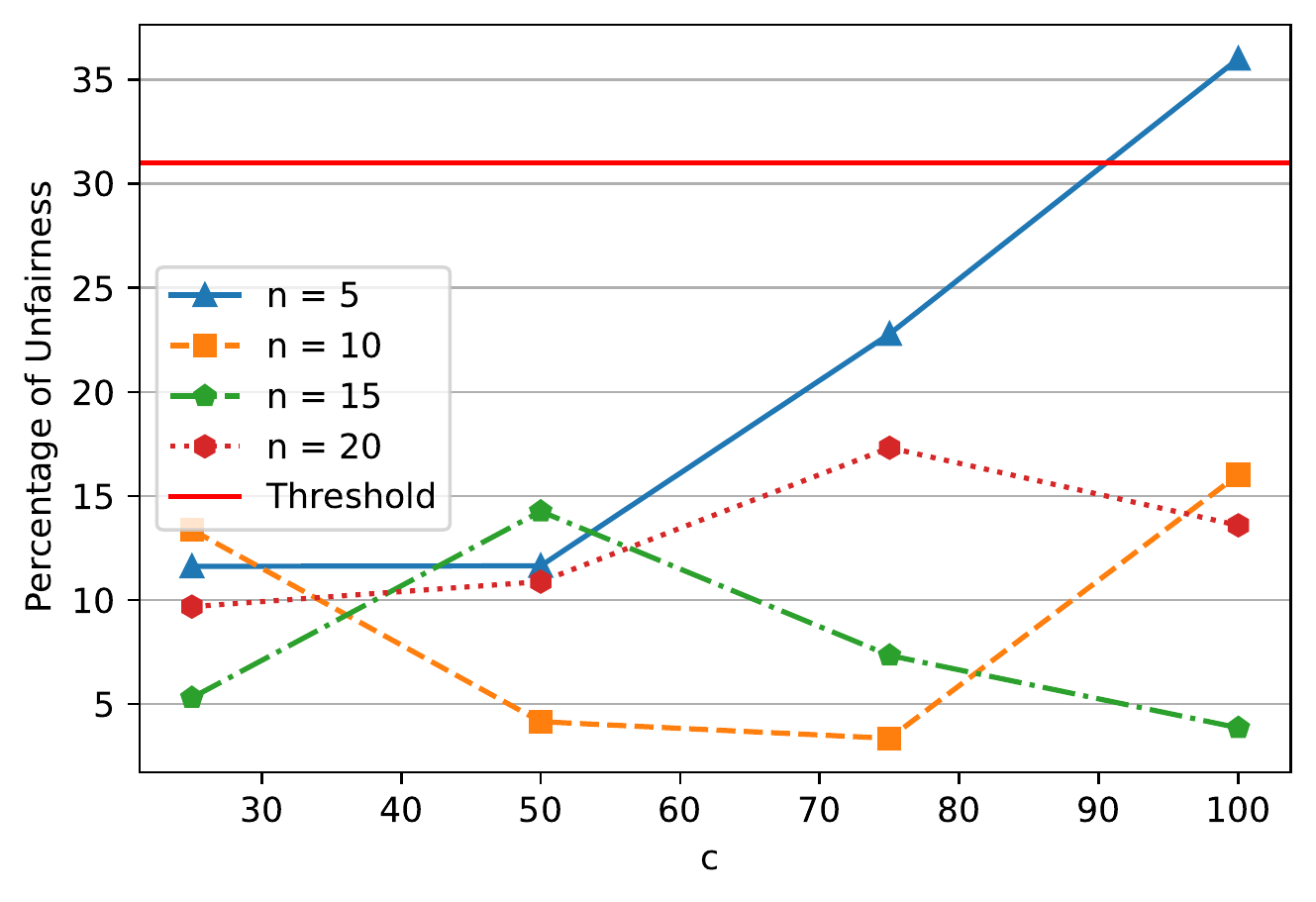}
	}
	\vspace{-10pt}
		\hfill
	\subfloat[Percentage of Unfairness versus $n$.\label{fig: NewYork2}]{%
		\includegraphics[scale = 0.45]{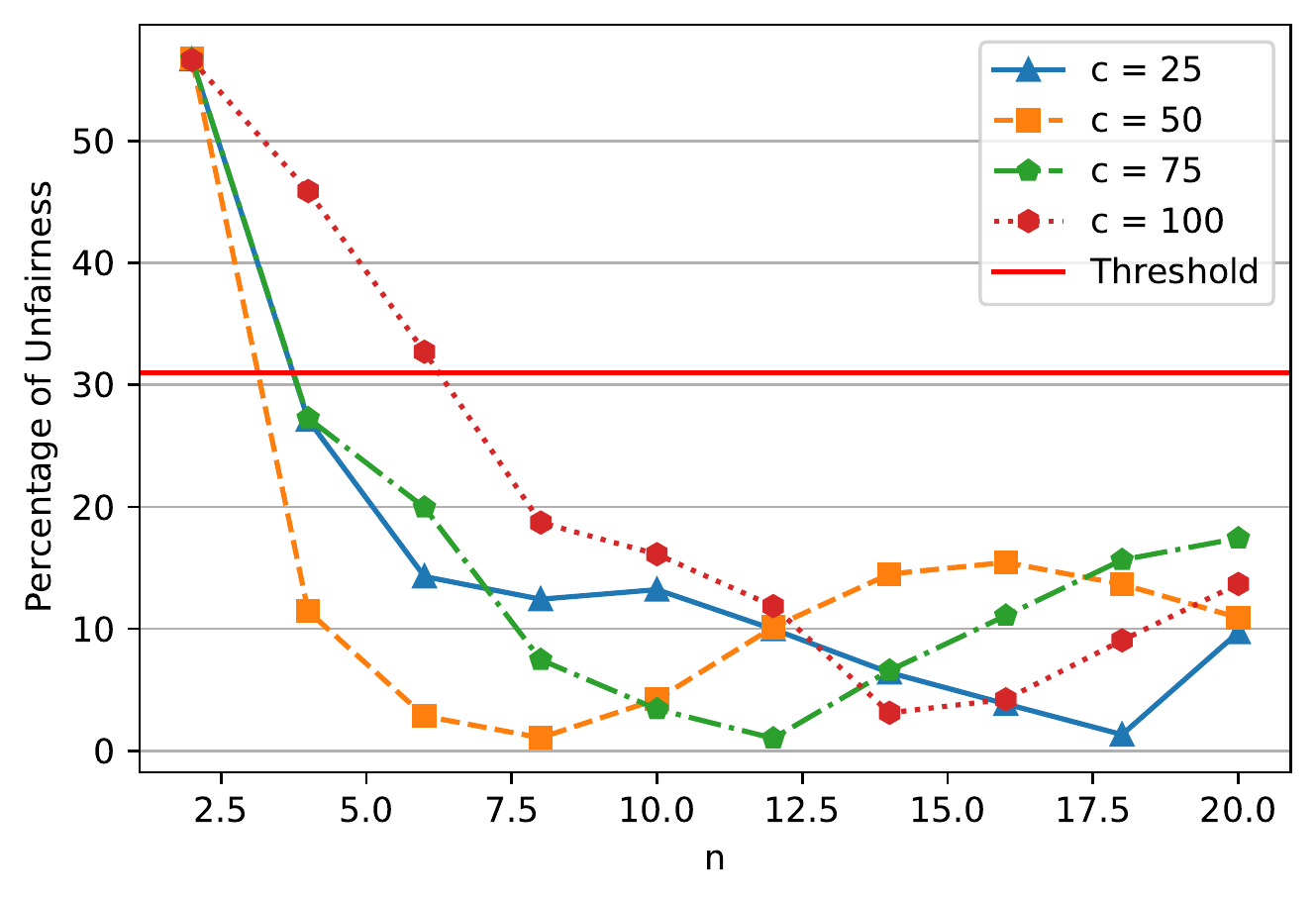}	
	}
	\vspace{-10pt}
	\hfill
	\subfloat[Fitting Error versus $c$.\label{fig: NewYork3}]{%
		\includegraphics[scale=.45]{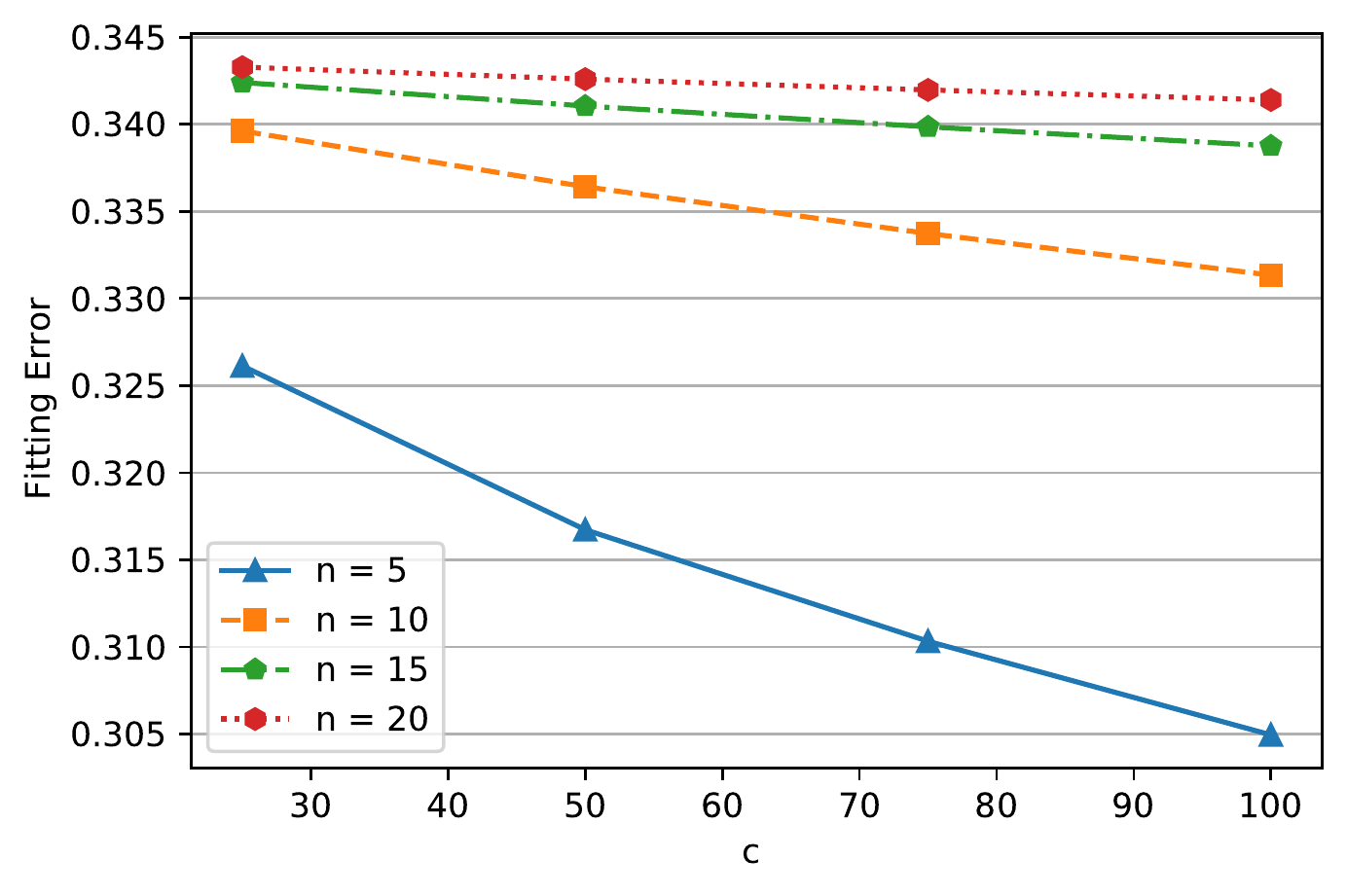}
	}
	\vspace{-10pt}
	\hfill
	\subfloat[Fitting Error versus $n$.\label{fig: NewYork4}]{%
		\includegraphics[scale=.45]{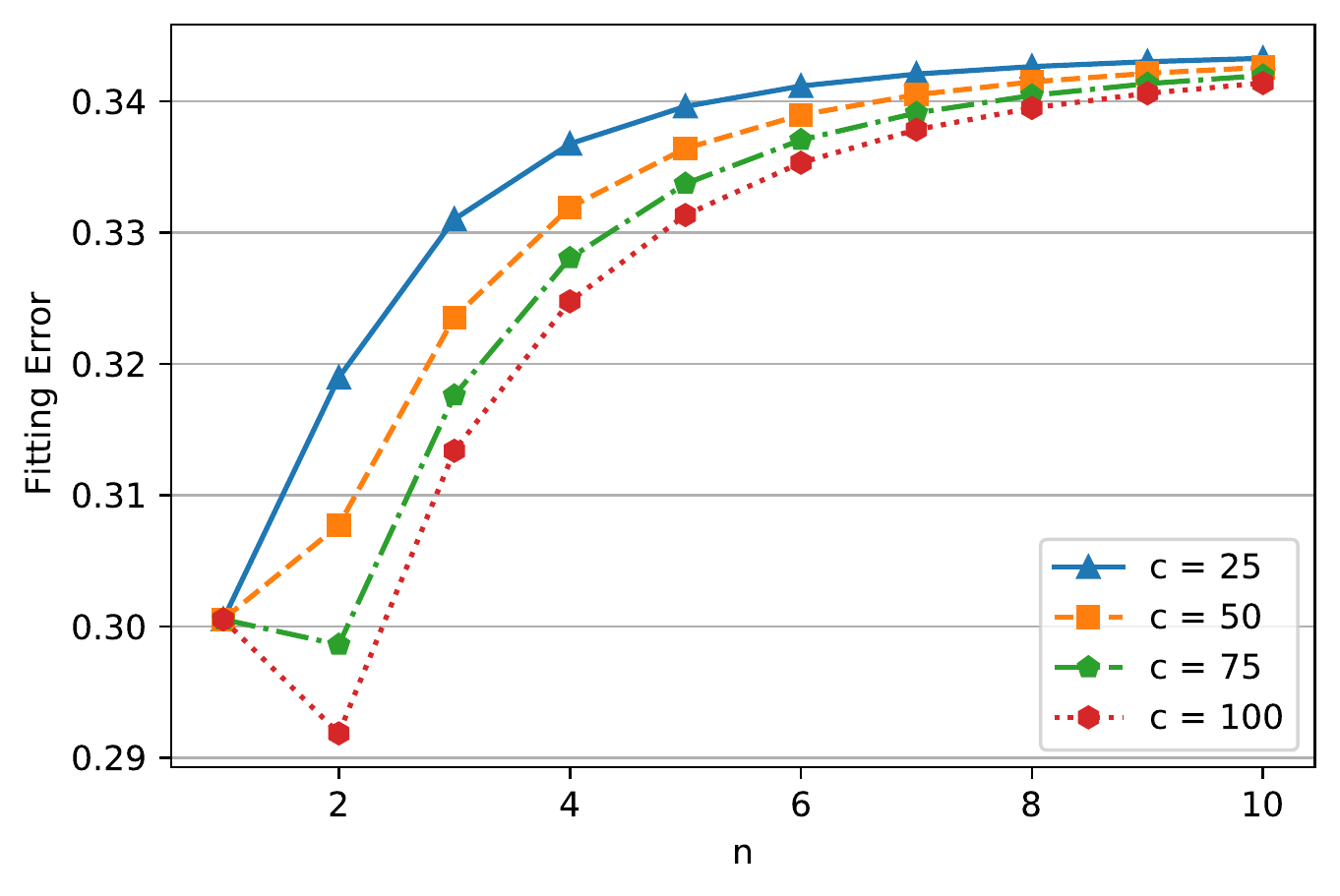}
	}
	\vspace{-10pt}
	\caption{Distance-based Mechanism, New York taxi dataset}
	\label{fig: average cost price}
	\vspace{-15pt}
\end{figure}

\begin{figure}[!t]
\centering
	\subfloat[Percentage of Unfairness versus $c$.\label{fig: Chicago1}]{%
	\includegraphics[scale = 0.45]{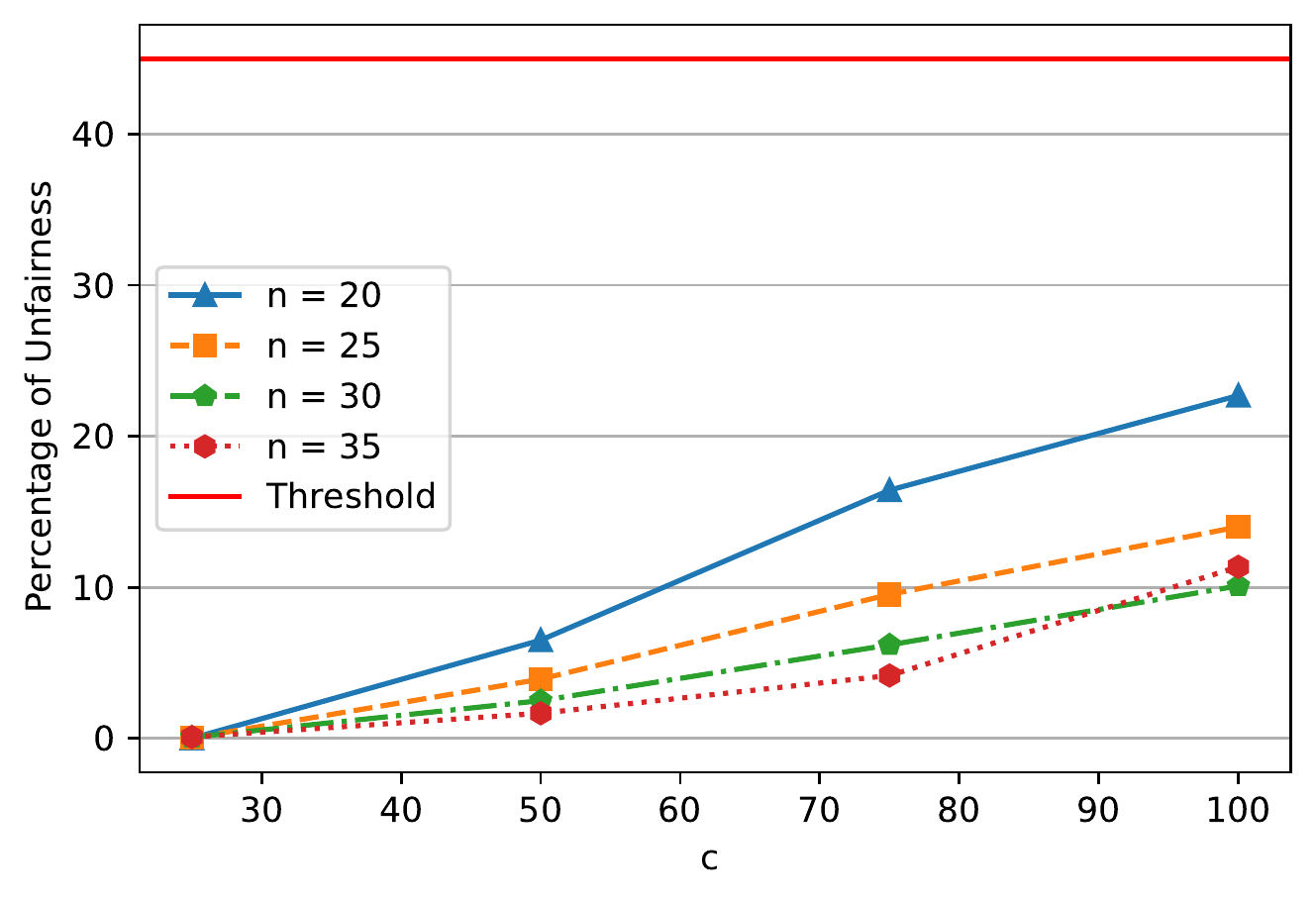}
	}
	\vspace{-10pt}
	\hfill
	\subfloat[Percentage of Unfairness versus $n$.\label{fig: Chicago2}]{%
		\includegraphics[scale = 0.45]{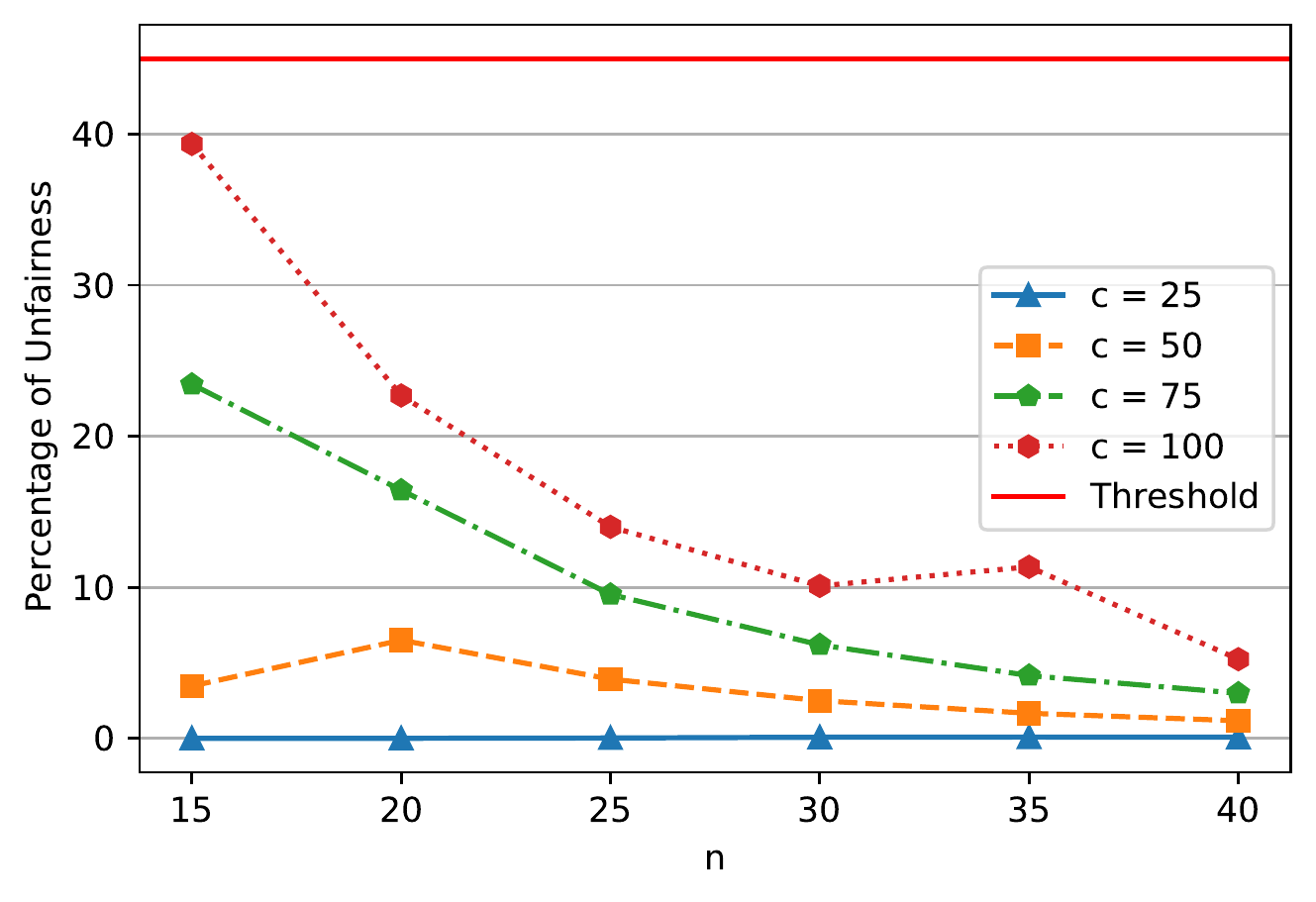}	
	}
	\vspace{-10pt}
	\hfill
	\subfloat[Fitting Error versus $c$.\label{fig: Chicago3}]{%
		\includegraphics[scale=.45]{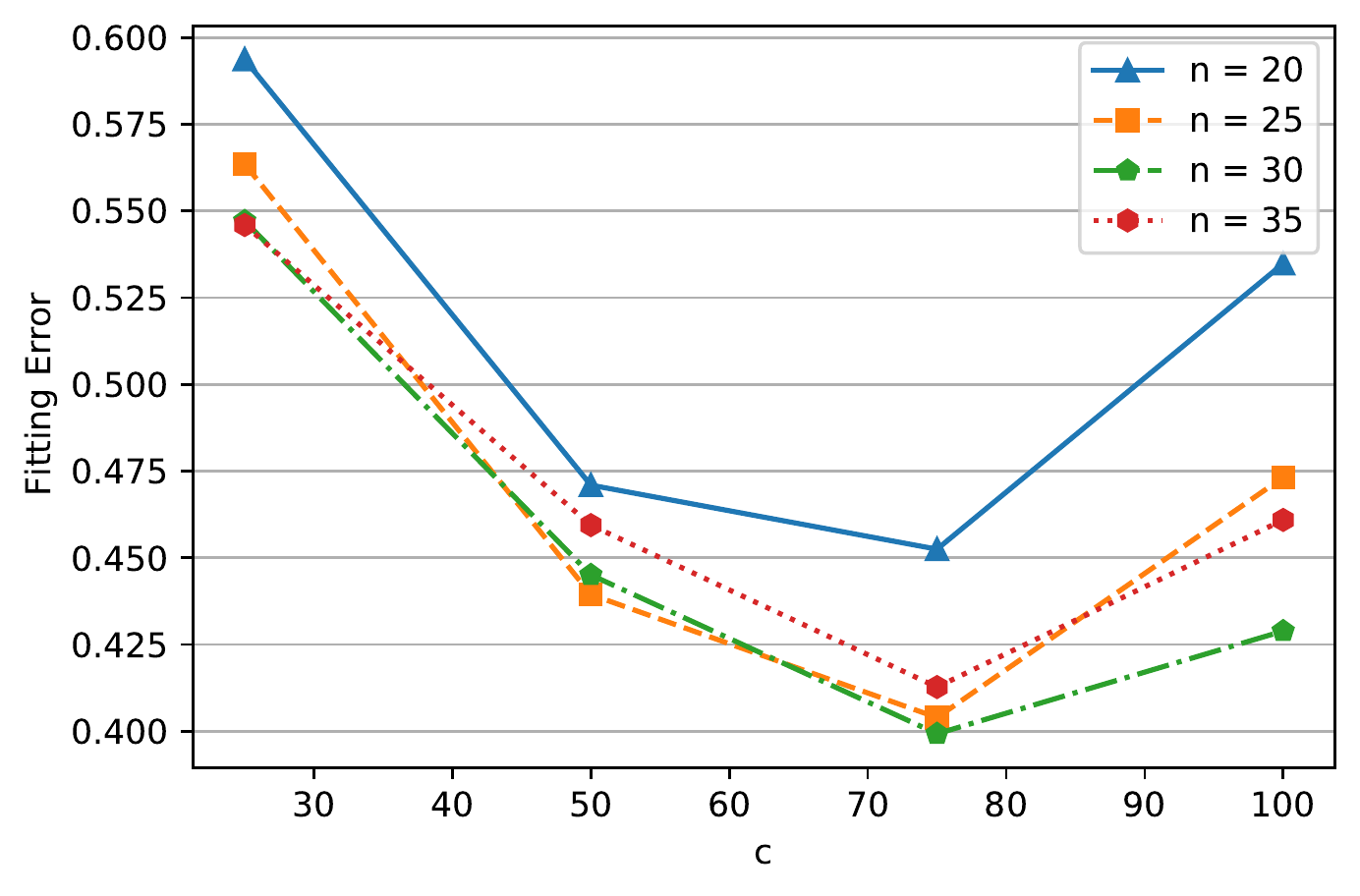}
	}
	\vspace{-10pt}
		\hfill
	\subfloat[Fitting Error versus $n$.\label{fig: Chicago4}]{%
		\includegraphics[scale=.45]{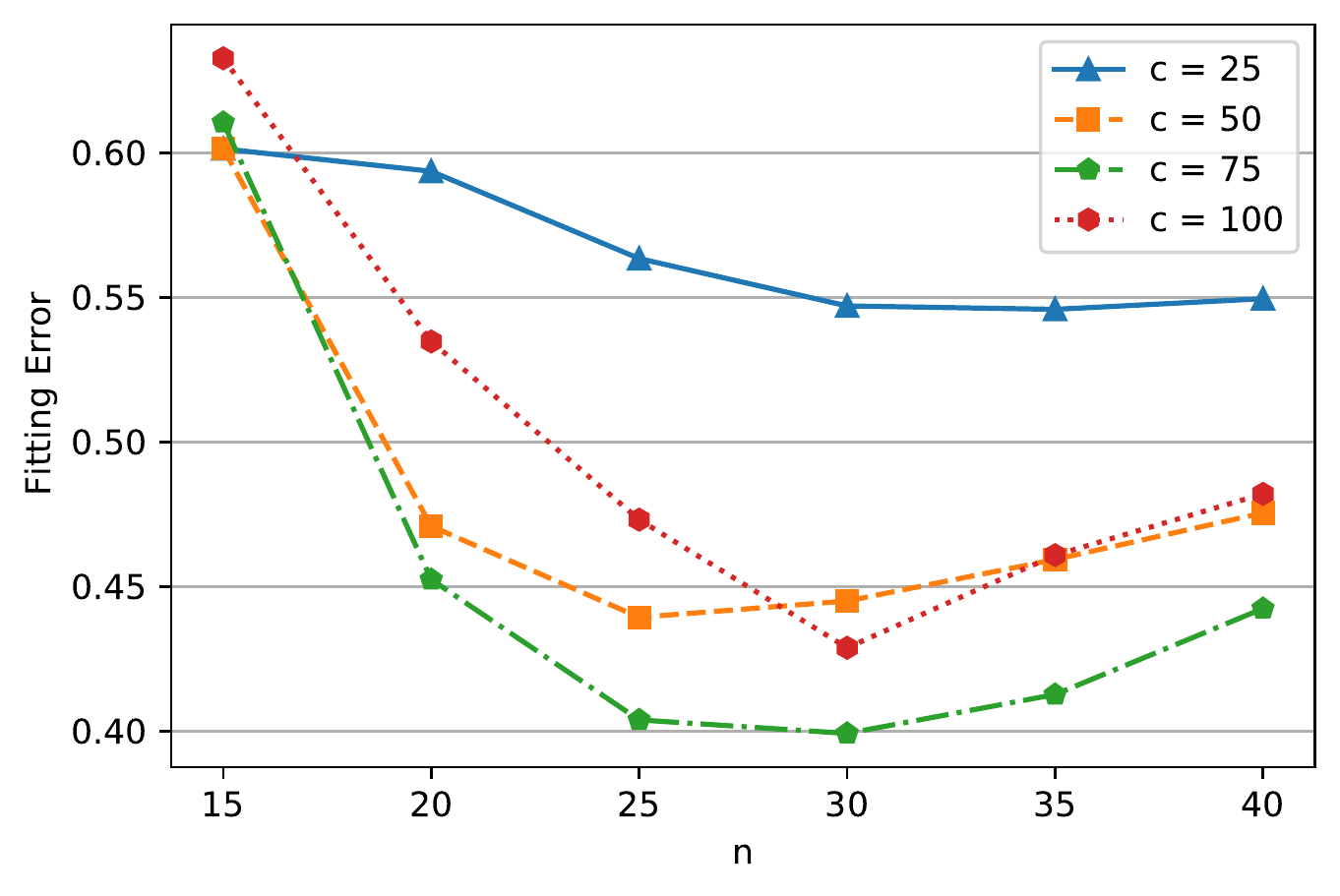}
	}
	\vspace{-10pt}
	\caption{Zone-based Mechanism}
	\label{fig: Results for Chicago}
	\vspace{-15pt}
\end{figure}

\section{Experimental Evaluation}\label{Sec: Experimental Evaluation}

We evaluate our proposed spatial data fairness mechanisms in the two studied scenarios. For the distance-based case, we use a dataset of taxi fares from New York City; for the zone-based case, we consider budget allocation to police departments according to the Chicago crime occurrence dataset. 
We ran experiments on a $3.40$GHz core-i7 Intel processor with 8GB RAM. The code is implemented in Python and uses the Trust Region Reflective least square implementation from Scipy~\cite{ScipyOptimizer} (the maximum number of iterations for convergence is set to $300$, and the default tolerance threshold value of $1e$-$2$ is used to stop the optimization iterations). 


\subsection{Distance-based Spatial Fairness}\label{Sec: New York Taxi Fares}

We sampled 120,000 records from the NYC taxi dataset~\cite{kaggle} providing over 55 million trips and their associated fares. 
We deployed an Artificial Neural Network (ANN) to assess the likelihood of taxi fares being fair in the system. \revision{Specifically, we seek to capture whether there is bias in the setting of fares based on the specific neighborhoods where the trip starts/ends. Ideally, the trip distance should be the only factor determining price (we carefully pre-process the dataset such that trips are clustered according to the time of week/day, such that differences in fare due to demand status and traffic causes are eliminated). Our goal is to first understand the percentage of records for which the individual fairness constraints do not hold with respect to traveled distances. Then, we analyze the performance of the proposed $c$-fair mechanism.}

{\em ML Model for Fairness Characterization.} Our ANN model consists of two hidden layers with $200$ and $100$ neurons and an output layer with two neurons representing the binary classification task. The activation function used in the model is RELU, the dropout probability for each layer set to 0.4, and cross-entropy is used as the loss function. The accuracy of the model is $92\%$. The input features include pick-up date and time (categorical hour, AM or PM, weekday, EDT date), pick-up longitude, pick-up latitude, drop-off longitude, drop-off latitude, passenger count, and distance traveled in kilometers. The ride fares have a mean of $10$ dollars with a standard deviation of $7$ dollars, and the average traveled distance is $3.31$ km with a standard deviation of $3.2$ km. For model training purposes, we split the data into training, validation and test datasets with $96,000$, $12,000$, and $12,000$ records, respectively. To generate the ground truth for the training dataset, we have used price per kilometer traveled as the indicator of how fair the associated traveled fares are. For every hour of the day, the average price per kilometer is calculated as the hard threshold between fair and unfair travels. 
The trips above the threshold are classified as unfair, and the values less than the threshold are assumed to be fair. This results in a total of $21,928$ trips being categorized as unfair.

Once the ANN model is trained, we predict the likelihood of each trip fare being fair on the test dataset. 
For every two records, the individual fairness constraint is evaluated to reveal whether fares are fair with respect to travel distances. In the absence of any fairness mechanisms being deployed, $32\%$ of constraints are not satisfied, hence those trips are unfair. \revision {The $32\%$ threshold is highlighted with a red horizontal line in the experimental plots, to highlight the fairness improvement of the considered approaches.}

Next, we apply the proposed $c$-fair mechanism to achieve distance-based fairness. Our experiments evaluate the performance based on four key metrics: percentage of unfairness (constraints were not satisfied), the degree of $c$-fair polynomial ($n$), the parameter $c$, and the root mean square (RMS) of fitting error to likelihood scores.

{\em Percentage of Unfairness.} Fig.~\ref{fig: NewYork1} shows the impact of increasing $c$ on reducing unfairness when the degree of the polynomial is $5$, $10$, $15$, and $20$. As expected, lower values of $c$ result in higher fairness in the system, with maximum fairness achieved when $c$ is equal to one. For the maximum fairness scenario, the percentage of unfairness is zero, meaning that all individual fairness constraints are satisfied for every two records in the dataset. By increasing the value of $c$, fair polynomials would have more room for maneuver and fitting to likelihood scores, but it comes with the cost of higher unfairness. Such behaviour demonstrates the utility-fairness trade-off captured by the constant $c$. 
Increasing the polynomial degree can be seen to improve the percentage of unfairness until it reaches the point where it overfits the likelihood scores, and the performance deteriorates. Fig.~\ref{fig: NewYork2} shows more clearly the impact of increasing the value of $n$ on unfairness. Lower $c$ values result in a lower percentage of unfairness for all polynomial degrees.

{\em Fitting Error.} Figs.~\ref{fig: NewYork3} and~\ref{fig: NewYork4} demonstrate the amount of utility loss in data due to fitting likelihood scores to a $c$-fair polynomial. Two key trends can be observed from the figures. First, increasing the value of $c$ lowers the fitting error. This is expected, as higher $c$ allows more flexibility for selecting coefficients and better fitting performance. Second, increasing the value of $n$ for the same value of $c$ raises the fitting error. To understand this behavior, one can intuitively look at the problem as allocating the same amount of budget among several buckets representing coefficients. Although increasing the degrees of freedom provides better fitting performance as higher degree monomials exist, it further restricts the budget for each coefficient. Thus, the lower degree monomials, which have a more significant impact on the performance, are allocated a lower amount of budget, negatively affecting the performance. 

{\em Computational Complexity.} We measure the computational overhead of $c$-fair polynomials in terms of time complexity, number of iterations before optimization convergence, and final optimization cost (\revision{the final optimization cost represents the value of the Scipy cost function upon reaching the solution~\cite{ScipyOptimizer}}). Fig.~\ref{Fig: NY Computation Complexity Analysis} shows the results. In each graph, the overhead is shown for four values of $c=25,\, 50,\, 75,\, 100$ plotted for varying polynomial degrees. Overall, the time complexity is in the order of milliseconds and does not limit the practical deployment of $c$-fair polynomials. The second graph illustrates the number of iterations before reaching the optimal point. The optimization process stops either by reaching the maximum number of iterations ($300$) or when the relative change in optimization cost remains below the tolerance threshold ($1e$-$2$). As explained previously, the slight oscillation in the performance is due to selecting a random start point for the optimization.  

Note that, increasing the degree of polynomial $n$ results in higher computational overhead.  This is expected, as more degrees of freedom (coefficients) lead to more effort for finding the optimal point. Another consistent behavior across all three figures is that increasing $c$ on average reduces the computation complexity cost and facilitates reaching the near-optimal point. The trend is more apparent in the final optimization cost figure, in which it can be clearly seen that a higher $c$ value leads to a lower cost.

\begin{figure*}[t]
	\subfloat[Time Complexity \label{NY Time}]{%
	\includegraphics[scale=.41]{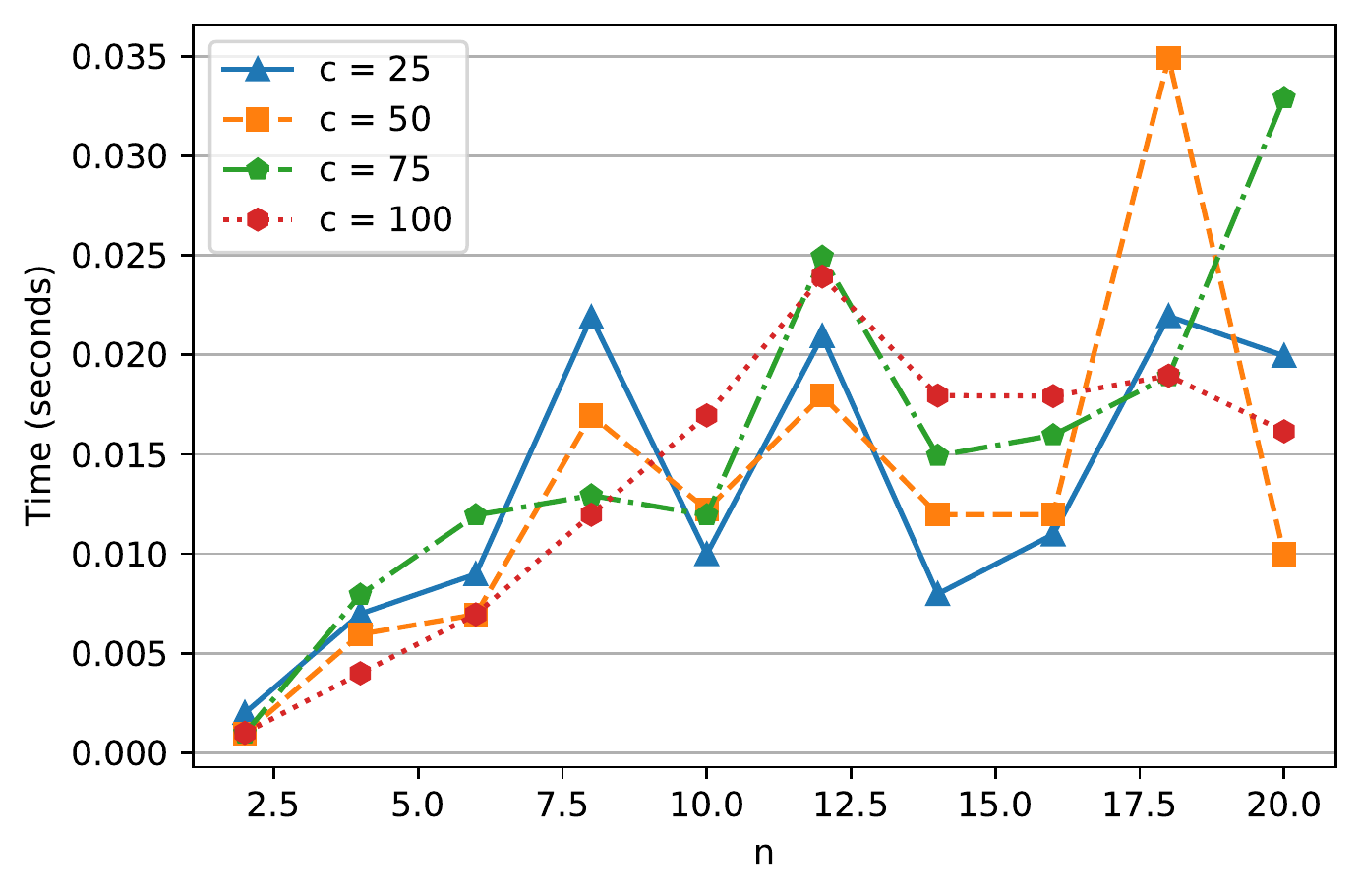}
	}
	\hfill
	\subfloat[Number of Iterations Before Convergence\label{NY Iterations}]{%
	\includegraphics[scale=.41]{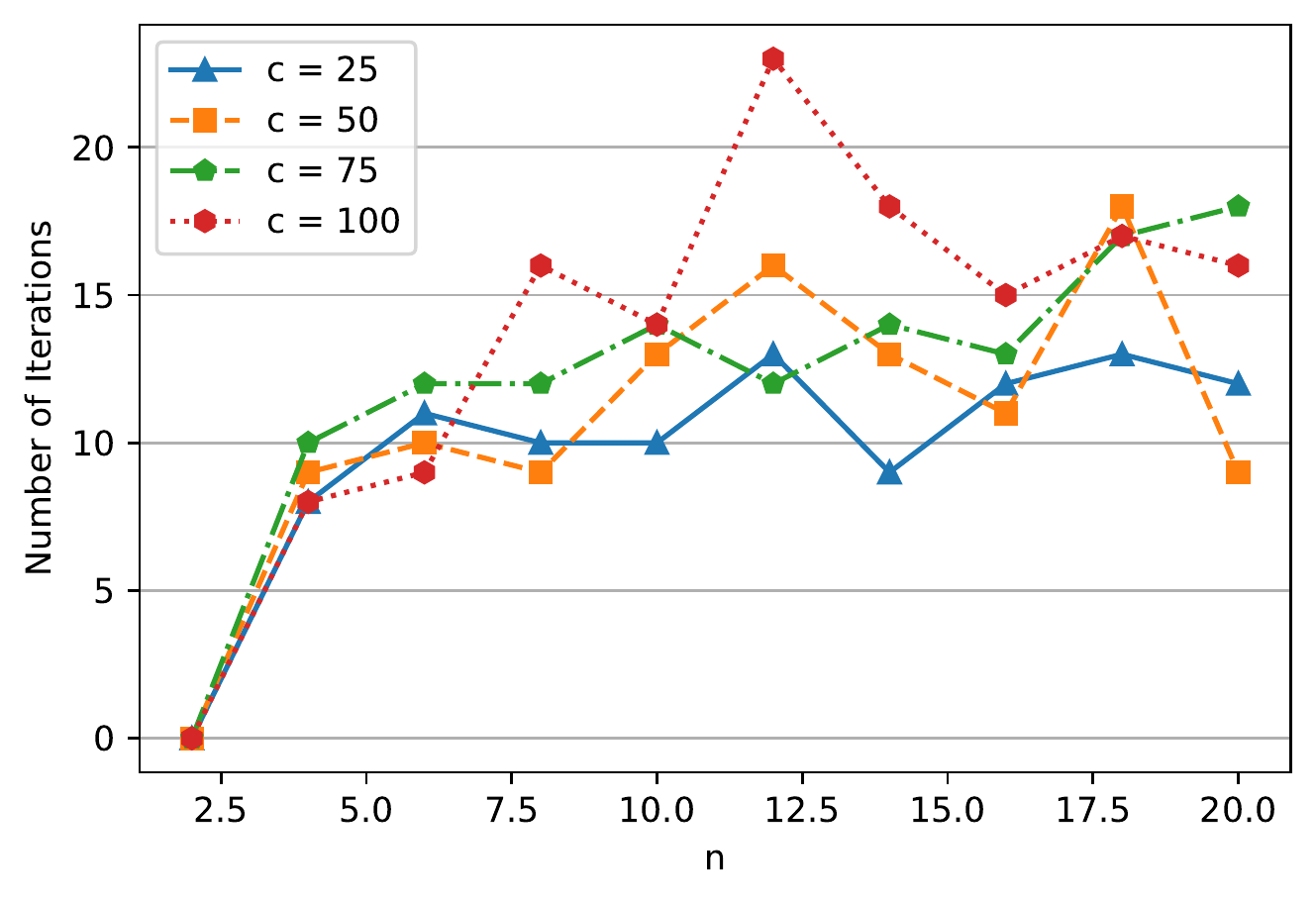}
	}
	\hfill
	\subfloat[Optimization Final Cost \label{NY Cost}]{%
	\includegraphics[scale=.41]{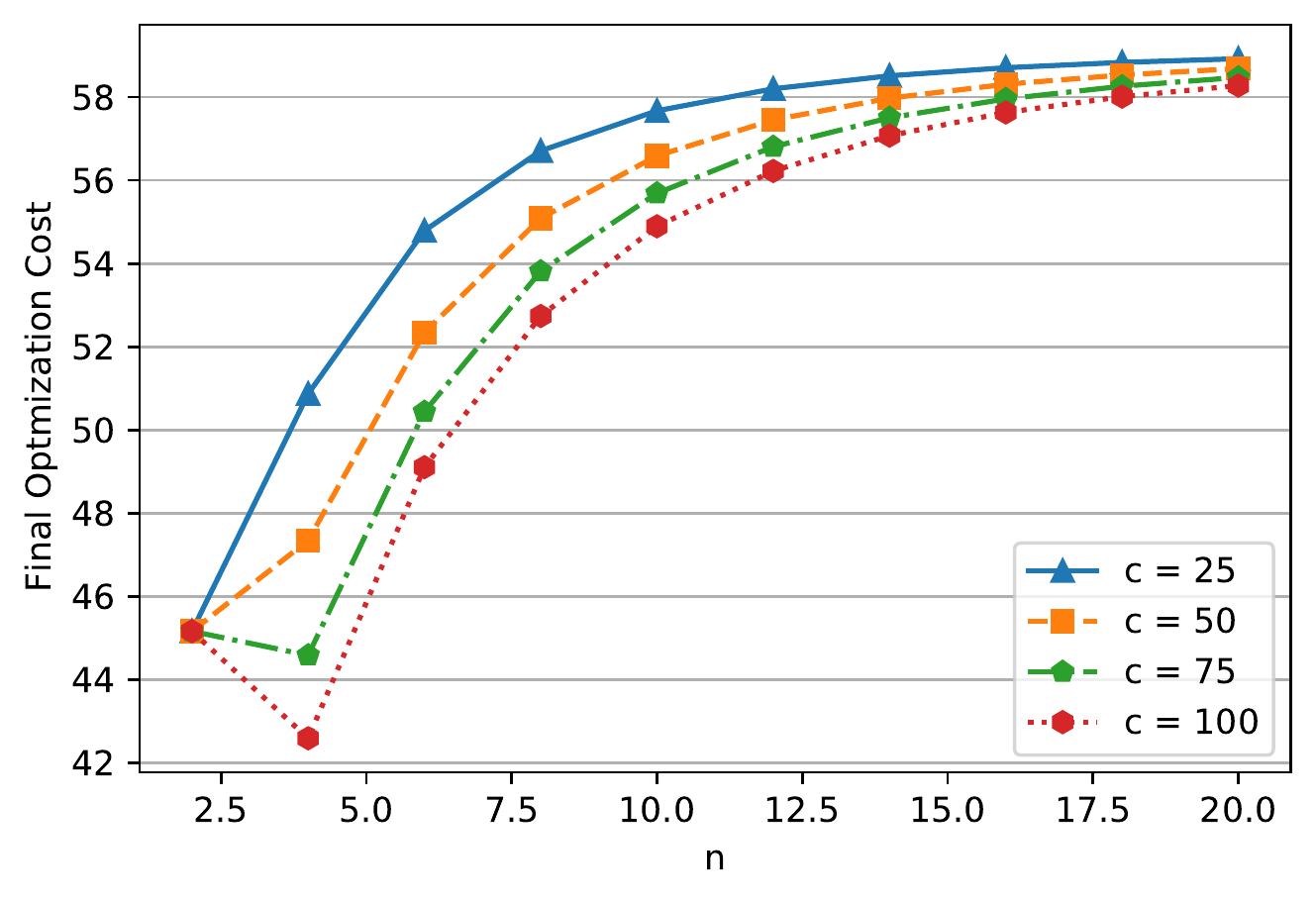}
	}
    \vspace{-10pt}	
	\caption{Computation Overhead Analysis on New York Taxi Dataset.}
	\label{Fig: NY Computation Complexity Analysis}
	\vspace{-10pt}
\end{figure*}

\begin{figure*}[t]
	\subfloat[Time Complexity\label{Chicago Time}]{%
	\includegraphics[scale=.41]{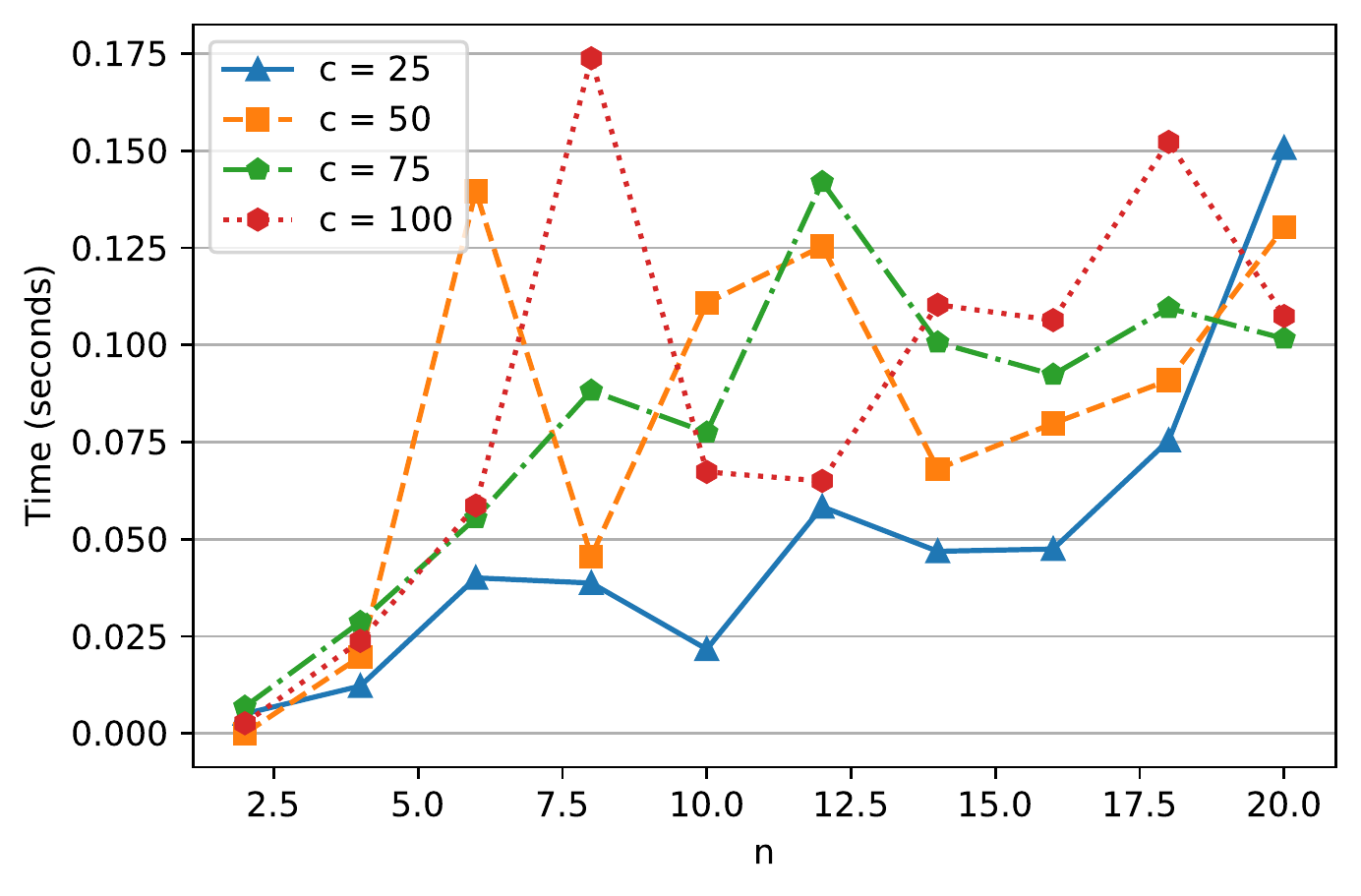}
	}
	\hfill
	\subfloat[Number of Iterations Before Convergence\label{Chicago Iterations}]{%
	\includegraphics[scale=.41]{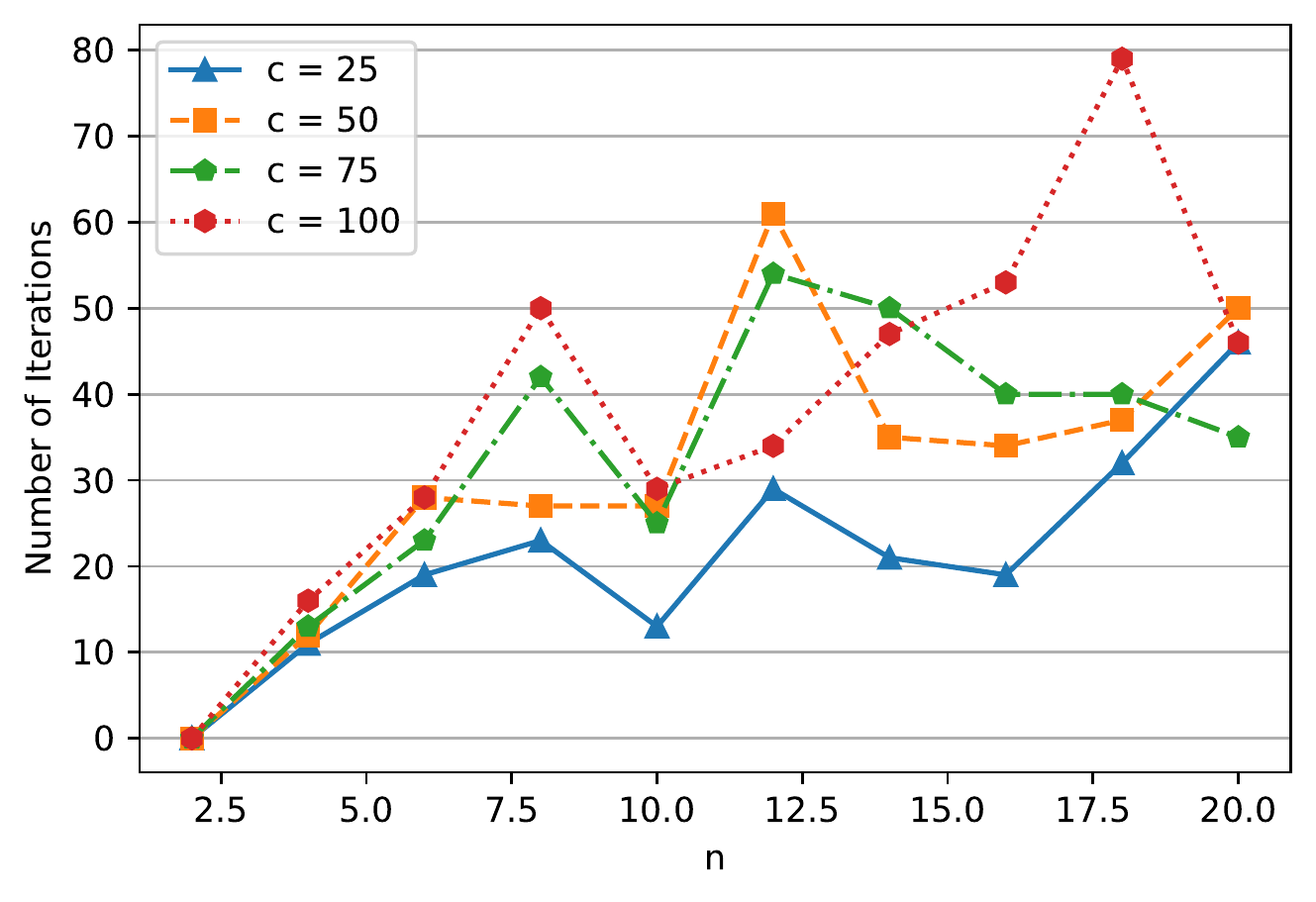}
	}
	\hfill
	\subfloat[Optimization Final Cost\label{Chicago Cost}]{%
	\includegraphics[scale=.41]{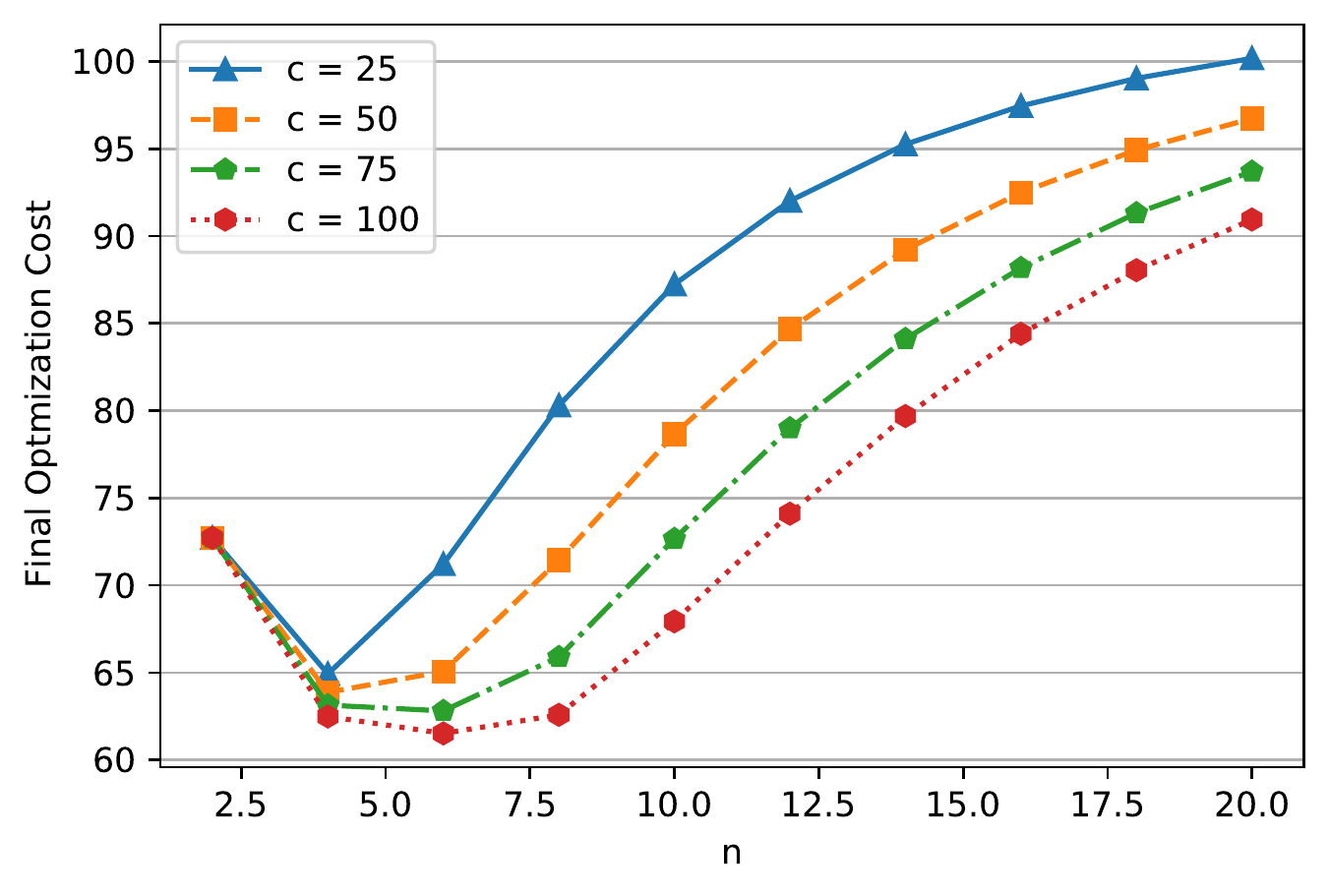}
	}
    \vspace{-10pt}	
	\caption{Computation Overhead Analysis on Chicago Crime Dataset.}
	\label{Fig: Chicago Dataset Computation Complexity Analysis}
	\vspace{-10pt}
\end{figure*}

\subsection{Zone-based Fairness}\label{Sec: Chicago Crime Zones}

For this scenario, we consider the case of budget allocation to different areas of Chicago, USA, based on the measured crime rates. We use the dataset provided by the Chicago Police Department's CLEAR (Citizen Law Enforcement Analysis and Reporting) system~\cite{chicago}, consisting of reported crime incidents in Chicago. A $1024\times 1024$ grid is overlaid on top of the Chicago map, and the goal is to fairly allocate the budget such that neighborhoods that are close to each other are treated similarly. We have selected seven major crime categories of sexual assault, homicide, kidnapping, sex offense, motor vehicle theft, criminal damage, and narcotics 
among the reported crimes, and trained a logistic regression model to infer the likelihood of crime occurrence in each cell. The training dataset includes the crime data from January to November 2015, and the December data is chosen as the test dataset. The accuracy of the model is $94\%$ and its output is a set of likelihoods indicating the probability of crime occurrence. The budget allocated to each cell is proportional to the likelihood score derived by the classifier.

Once the likelihood scores are generated, they are used with $X$ and $Y$ cell coordinates to achieve individual location fairness with the distance metric set to $2$-norm. 
In absence of any fairness mechanism, we determine the percentage of individual location fairness constraints not being satisfied at $44.0 \%$. 

\revision{To understand if the expansion to higher polynomial degrees is essential, we started our experiments by focusing on degree one $c$-fair polynomials and applying the results in Theorem~\ref{Theorem: sufficient condition2}. As expected, the fitting error was rather high, and the utility was insufficient. We also noticed that for degree one polynomials, the optimal solution is achieved even when the value of $c$ is equal to one. Therefore, increasing $c$ does not help with improving the fitting error. Thus, it is crucial to use higher degree polynomials for this purpose.}

\begin{figure}[t]
	\subfloat[Distance-based Mechanism\label{r_1}]{%
	\includegraphics[scale=.3]{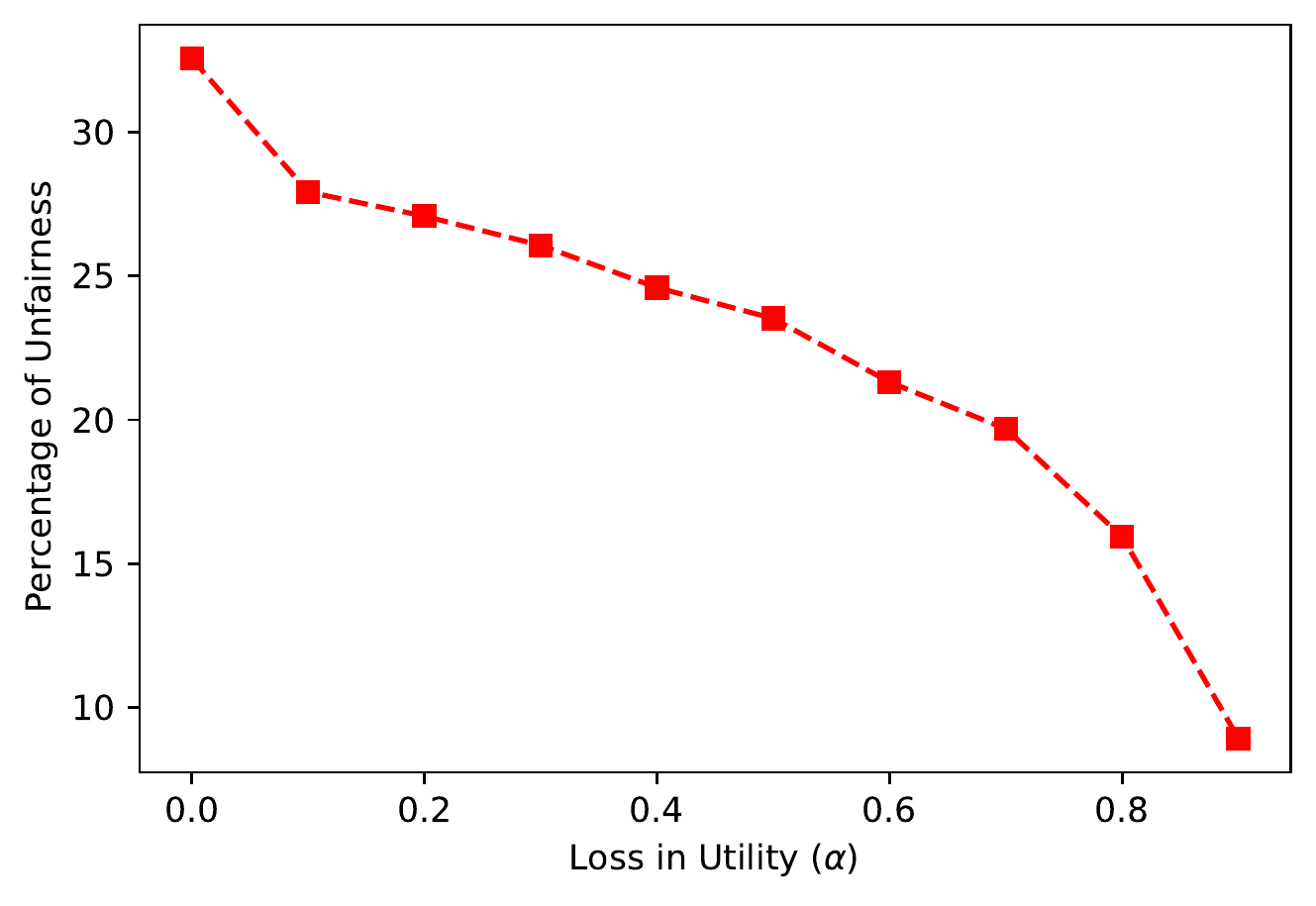}
	}
	\hfill
	\subfloat[Zone-based Mechanism\label{r_2}]{%
	\includegraphics[scale=.3]{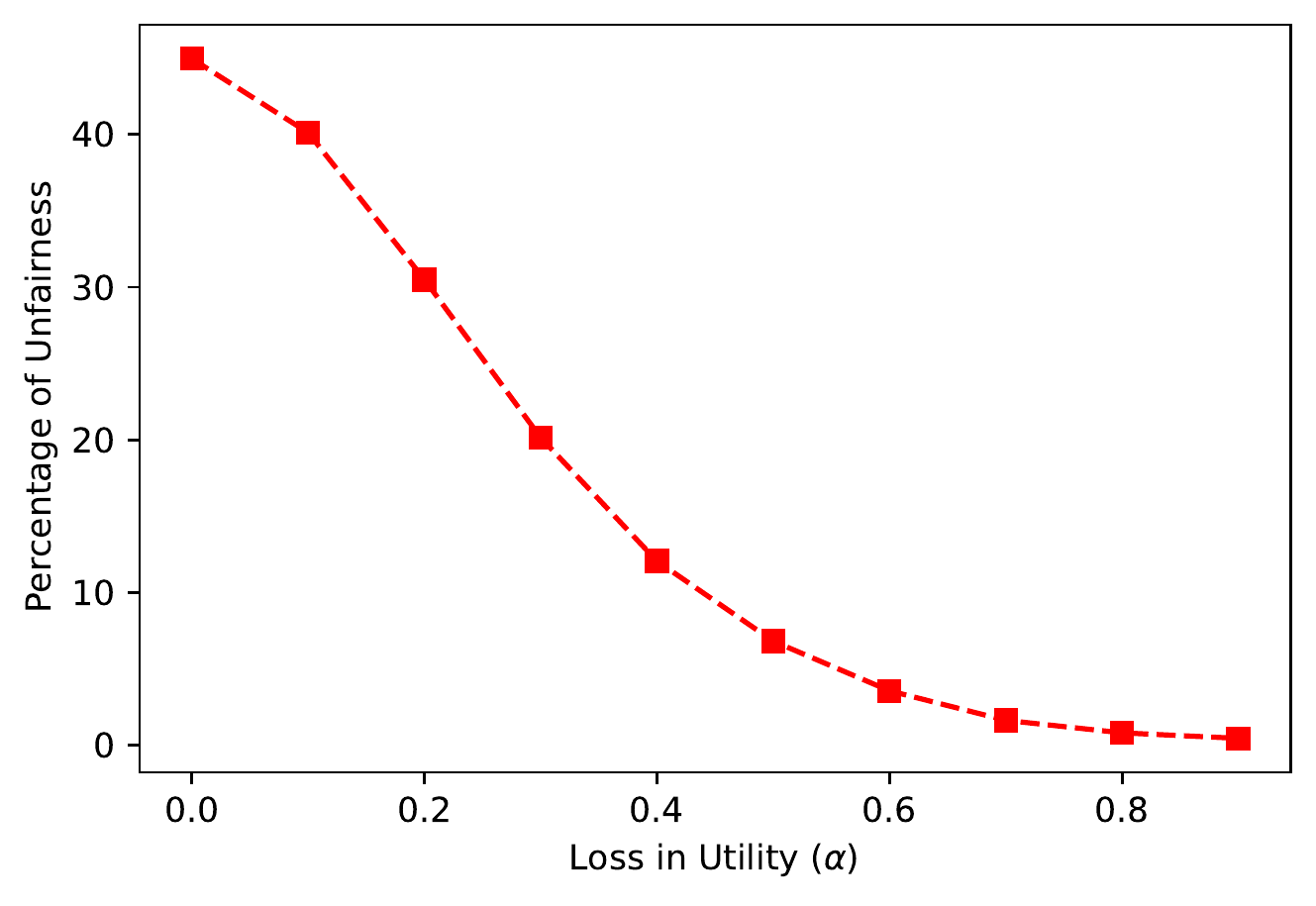}
	}
	\vspace{-10pt}
	\caption{Comparison Benchmark Performance.}
	\label{Fig: Baselines}
	\vspace{-15pt}
\end{figure}	

Next, we apply the optimization formulation derived in Theorem~\ref{Theorem: sufficient condition4}, the most generalized formula allowing each dimension to contribute in fitting with a degree $n$ polynomial. Fig.~\ref{fig: Results for Chicago} demonstrates the performance of $c$-fair polynomials for achieving individual location fairness on the crime dataset. The patterns are generally consistent with the distance-based case considered in the New York taxi fares experiment. Figs.~\ref{fig: Chicago1} and~\ref{fig: Chicago2} show the impact of $c$ and $n$ on the percentage of unfairness and Figs.~\ref{fig: Chicago3}  and~\ref{fig: Chicago4} show the performance with respect to utility. The red line is used as the reference point representing the percentage of unfairness in the original data, in the absence of fairness mechanisms.

Increasing the value of $c$  results in a lower degree of fairness and higher fitting error once the degree of polynomial reaches an acceptable level. This result further substantiates the fairness-utility trade-off in the system, also observed in the distance-based case. In a 2-dimensional space, using a degree of $10$ and the above polynomial to model each dimension can ensure that scores are under-fitted, preventing high fitting error values. In summary, the amount of fairness achieved with the fair polynomials, even for a reasonably low degree for polynomials such as $15$ and values of $c$ greater than $10$, can be seen to be over $70\%$. 

Fig.~\ref{Fig: Chicago Dataset Computation Complexity Analysis} shows the computation complexity of the proposed mechanism. The first point to notice is that a relatively high amount of time is required to achieve zone fairness  compared to the distance-based setting. However, computational complexity is still low in absolute value, and not an obstacle for practical deployment, with sub-second execution time. 

\subsection{Comparison with benchmarks}
\label{Section: bench}
\revision{
We derive a threshold-based benchmark from the fairness mechanism in~\cite{dwork2012fairness} and the binary evaluation approach in~\cite{fabris2022algorithmic}. 
Given constant threshold $t$, let polynomial $P(l_i)=t$, and let parameter $\alpha$ define by how much scores can be altered (e.g., if $\alpha=0.1$, each score can be altered by at most $\pm 0.1$). For DtR, the benchmark cycles through each score $M_i$ and pushes it towards polynomial $P(l_i)$ with an allowance of $\alpha$:
\begin{equation}
    M(l_i) \leftarrow M(l_i) + sign(P(l_i)-\alpha)\times  \min(\alpha, |P(l_i)-\alpha|)
\end{equation}
\vspace{-10pt}
}

\begin{figure*}[t]
	\subfloat[$c=25$ \label{r_1_1}]{%
	\includegraphics[scale=.16]{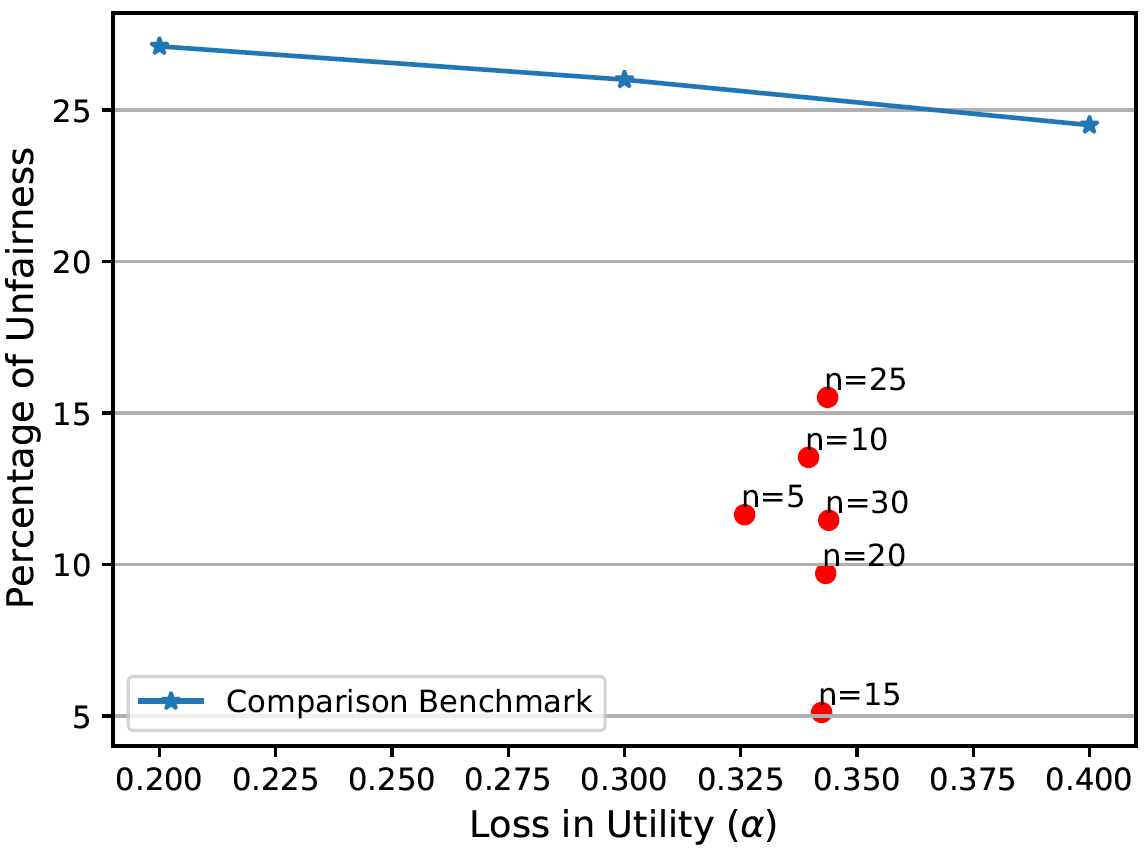}
	}
	\hfill
	\subfloat[$c=50$\label{r_1_2}]{%
	\includegraphics[scale=.16]{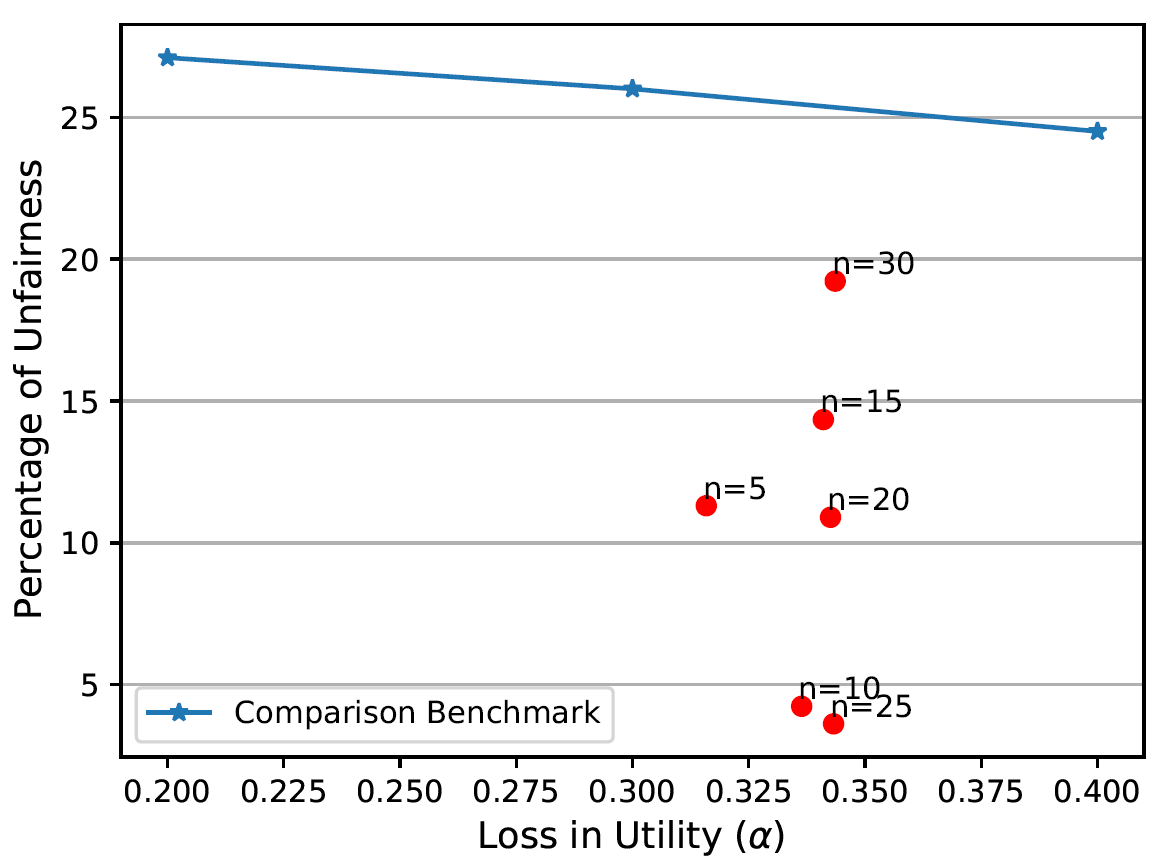}
	}
	\hfill
	\subfloat[$c=75$ \label{r_1_3}]{%
	\includegraphics[scale=.16]{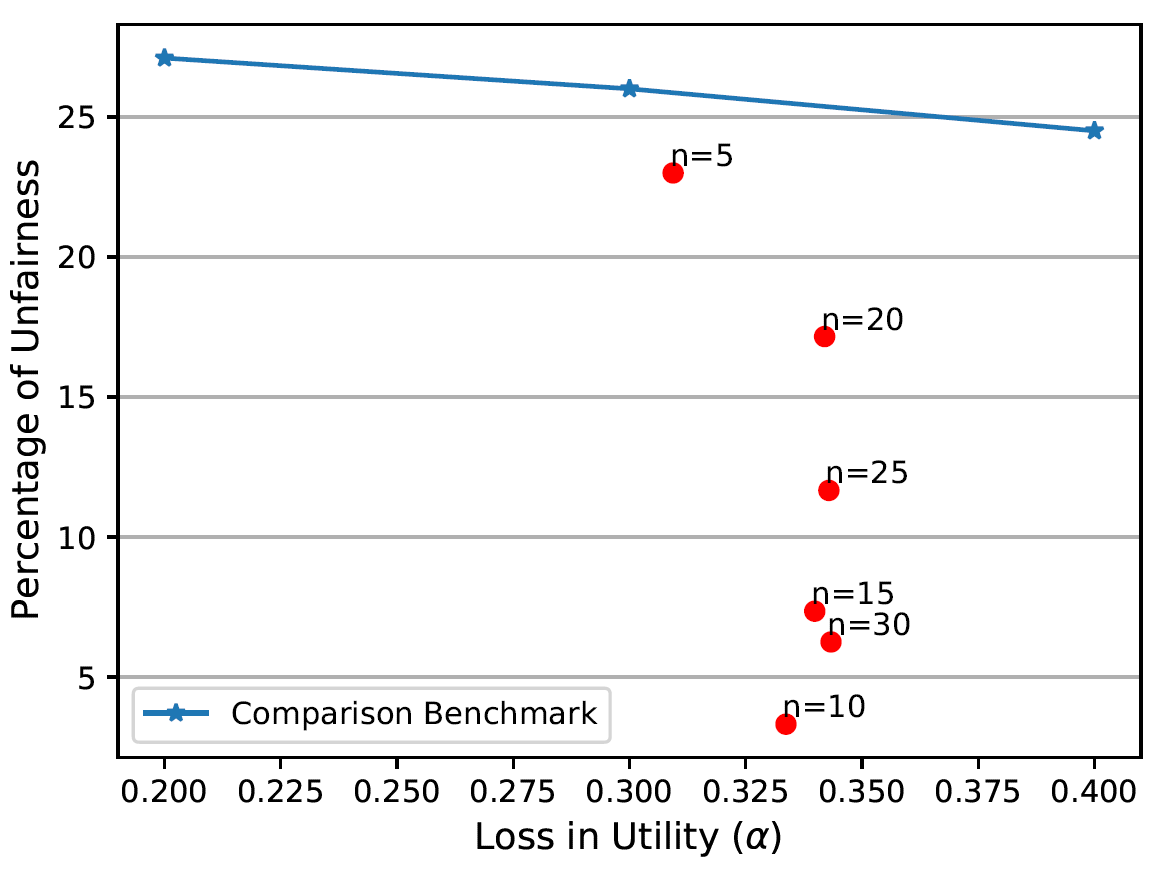}
	}
	\vspace{-10pt}
	\caption{\revision{Distance-based Mechanism Benchmark Comparison, New York taxi dataset}}
	\label{Fig: NY DtR Benchmark Comparison}
	\vspace{-10pt}
\end{figure*}

\begin{figure*}[t]
	\subfloat[$c=25$\label{r_2_1}]{%
	\includegraphics[scale=.16]{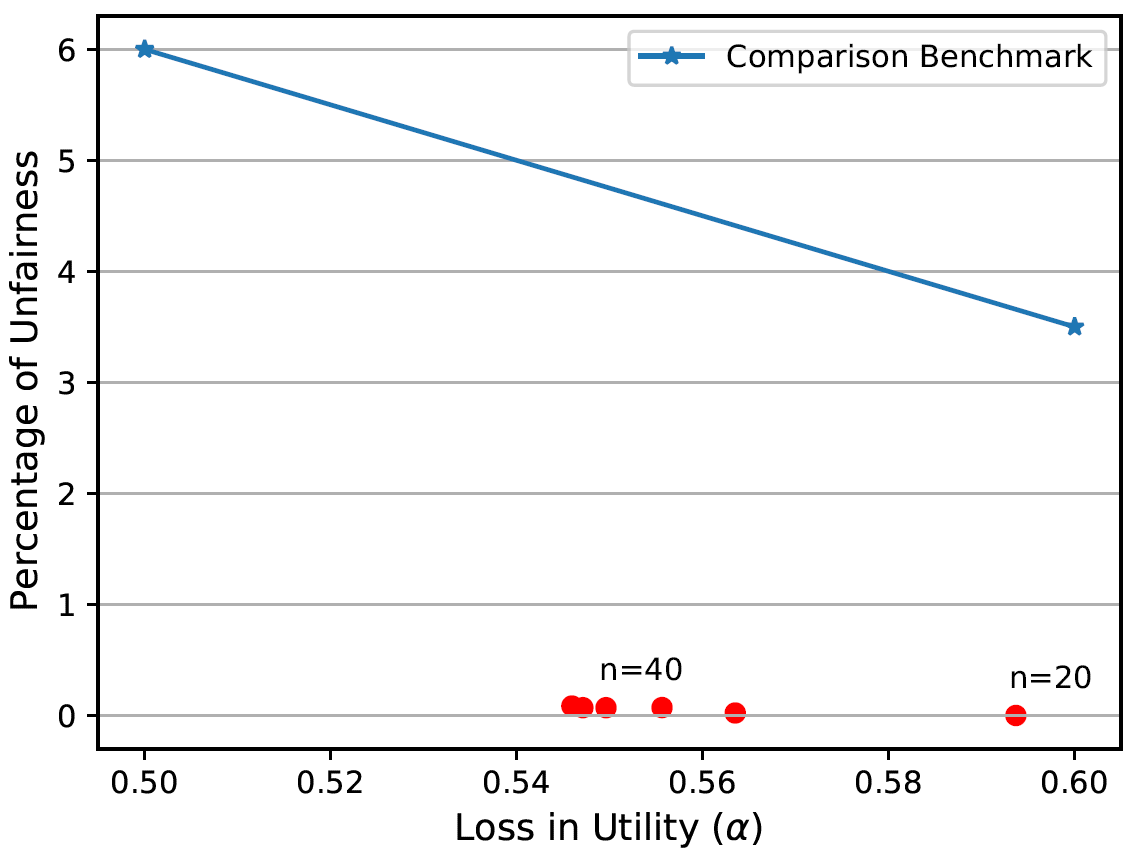}
	}
	\hfill
	\subfloat[$c=50$\label{r_2_2}]{%
	\includegraphics[scale=.16]{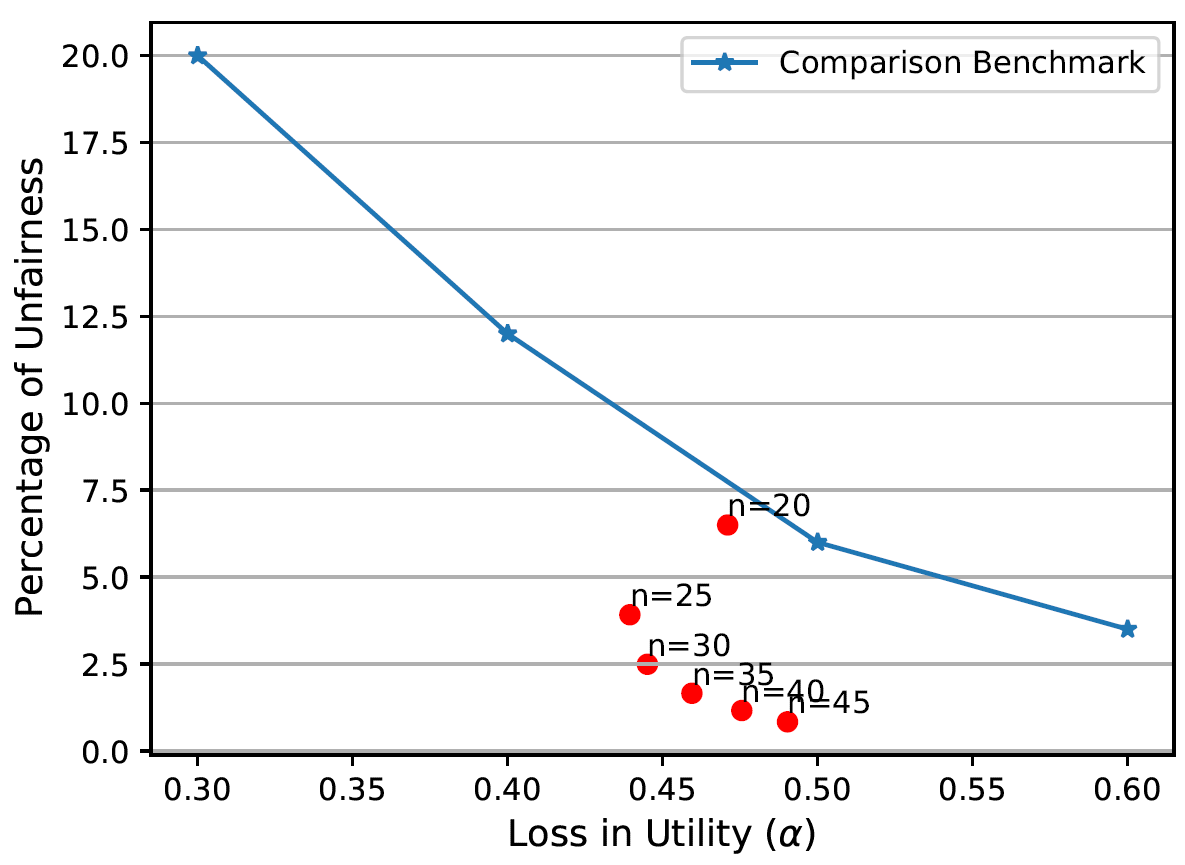}
	}
	\hfill
	\subfloat[$c=75$\label{r_2_3}]{%
	\includegraphics[scale=.16]{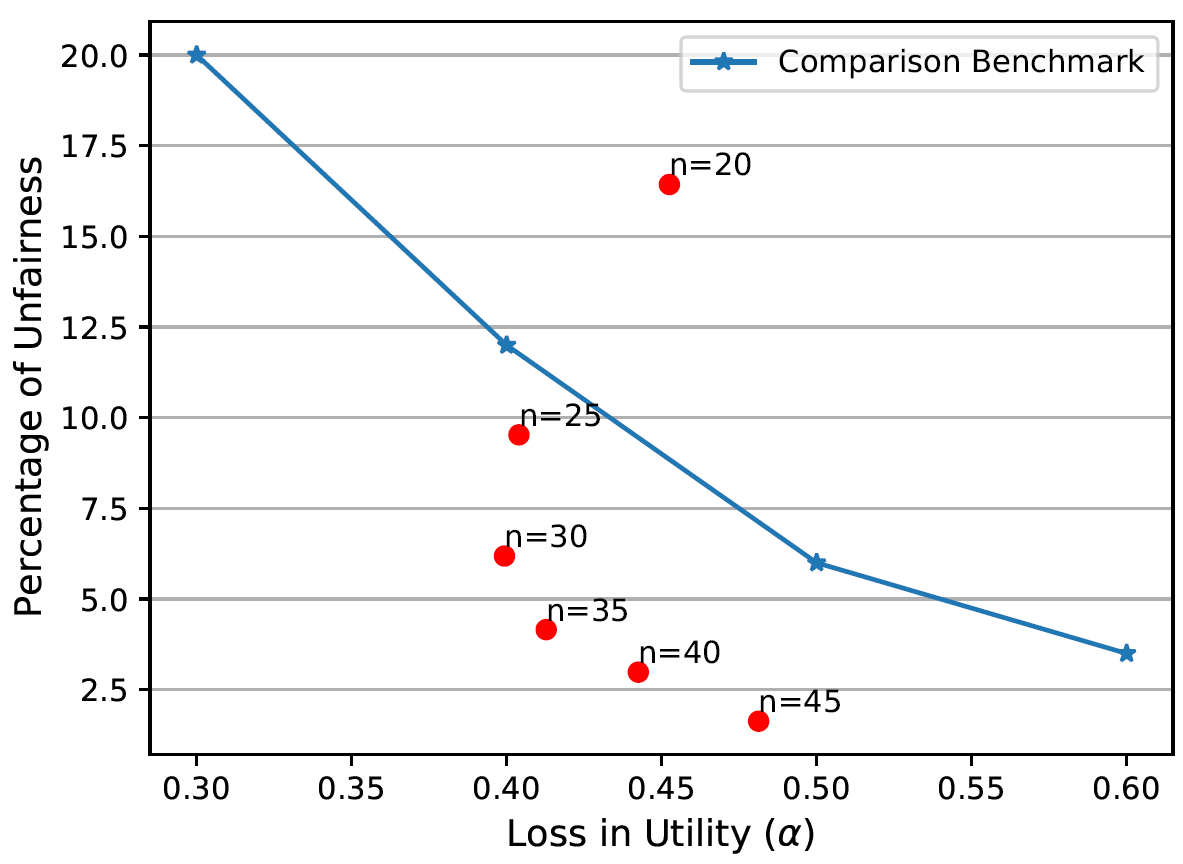}
	}
    \vspace{-10pt}
	\caption{\revision{Zone-based Mechanism Benchmark Comparison, Chicago crime dataset}}
	\label{Fig: Chicago DtR Benchmark Comparison}
	\vspace{-10pt}
\end{figure*}

\revision{The $sign$ operation ensures the direction of change is favorable to the benchmark, and $min$ ensures that the change in score does not overshoot $P(l_i)$. When $\alpha$ is zero, no utility loss exists, and the unfairness percentage is the same as the original. As $\alpha$ grows, the flexibility margin results in more constraints being satisfied until absolute fairness is achieved. 
For the zone-based fairness case, we use $P(x,y) = (x+y)/\sqrt{2}$, easily extensible to higher dimensions.}

\revision{Figure~\ref{Fig: Baselines} shows the utility-fairness trade-off obtained by the baseline for both distance- and zone-based cases. Unfairness is monotonically decreasing as one allows a higher $\alpha$ alteration to the score. We notice a plateau of desirable behavior concentrated between $\alpha=[0.2,0.6]$, where the utility-fairness trade-off is good (for $\alpha$ values below 0.2 unfairness is too high, whereas for values higher than 0.6 the fairness gain fades in the multi-variate case).}

\revision{
Next, we focus our attention to this desirable $\alpha$ range, and we plot the relative performance of the benchmark versus our proposed fair polynomials approach for various values of $c$ and $n$. 
The plots in Figs.~\ref{Fig: NY DtR Benchmark Comparison} and~\ref{Fig: Chicago DtR Benchmark Comparison} show that our approach provides a superior trade-off compared to benchmarks. The benchmark outperforms only for zone-based fairness when $n=20$. In all other cases, fair polynomials provide either a vastly improved utility, or better fairness.}

\vspace{-5pt}

\section{Conclusion}
\label{Sec: Conclusion}

We studied in depth the problem of individual fairness for location data, and we identified sources of location bias that can occur in practical settings. We formulated two distinct problems that are relevant to location-based applications, and we devised specific techniques to achieve fairness while preserving utility, with the help of a novel construction called fair polynomials. While our focus is on geospatial data and applications, fair polynomials have the potential to provide useful building blocks for fairness in other application domains. In future work, we plan to study more complex types of location-based interaction. At the same time, we plan to study the effect of fairness mechanisms in conjunction with other constraints, such as privacy, e.g., devise mechanisms that can achieve both fairness and privacy.

\begin{acks}
This research has been funded in part by NSF grants IIS-1910950,
IIS-1909806, CNS-2027794, CNS-2125530, NIH grant R01LM014026, the USC Integrated Media Systems Center
(IMSC), and unrestricted cash gift from Microsoft Research and Google. Any opinions,
findings, and conclusions or recommendations expressed in this material are
those of the author(s) and do not necessarily reflect the views of any
of the sponsors such as the NSF.
\end{acks}


\bibliographystyle{ACM-Reference-Format}
\bibliography{sample}

\balance

\newpage


\appendix
\section{Appendix (Proof of Lemmas)}\label{Section: appendix}

\begin{lemma}[\em Generalized Titu's Lemma]
Let $m$ be an integer greater than or equal to $2$, $a_i^m$ a non-negative real number, and $x_i$ a positive real number. Then,
\begin{equation}
    n^{m-2} \sum_{i\Equal 1}^{n} \dfrac{a_i^m}{x_i}\geq \dfrac{(\sum_{i\Equal1}^n a_i)^m}{\sum_{i\Equal1}^n x_i}
\end{equation}
\end{lemma}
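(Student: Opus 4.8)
The plan is to deduce the bound from the generalized (three-factor) H\"older inequality. Since every $x_i > 0$, multiplying both sides of the claim by $\sum_{i} x_i$ shows it is equivalent to the product form
\begin{equation}
\Big(\sum_{i=1}^{n} a_i\Big)^m \le n^{m-2} \Big(\sum_{i=1}^{n} \frac{a_i^m}{x_i}\Big)\Big(\sum_{i=1}^{n} x_i\Big),
\end{equation}
so it suffices to establish this. First I would reduce the lemma to this inequality, noting that all quantities are non-negative so no sign issues arise.

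The key idea I would use is to factor each $a_i$ into three pieces and apply H\"older with a carefully chosen triple of conjugate exponents. Concretely, for $a_i \ge 0$ and $x_i > 0$ write
\begin{equation}
a_i = \Big(\frac{a_i^m}{x_i}\Big)^{1/m} \cdot x_i^{1/m} \cdot 1,
\end{equation}
and then invoke the generalized H\"older inequality $\sum_i \prod_{j=1}^{3} u_{ij} \le \prod_{j=1}^{3}\big(\sum_i u_{ij}^{p_j}\big)^{1/p_j}$ with exponents $p_1 = p_2 = m$ and $p_3 = m/(m-2)$. These satisfy $1/p_1 + 1/p_2 + 1/p_3 = 1$, and each $p_j \ge 1$ precisely because $m \ge 2$ (this is exactly where the hypothesis on $m$ is consumed). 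The first two factors then contribute $\big(\sum_i a_i^m/x_i\big)^{1/m}$ and $\big(\sum_i x_i\big)^{1/m}$, while the third, constant, factor contributes $\big(\sum_i 1\big)^{1/p_3} = n^{(m-2)/m}$. Raising the resulting inequality to the $m$-th power produces exactly the product form above, finishing the argument.

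The main subtlety I anticipate lies in the treatment of the constant factor: inserting the seemingly innocuous $1$ as a genuine third term is precisely what manufactures the $n^{m-2}$ coefficient, so the nontrivial step is recognizing that this triple-factorization (rather than the two-factor Cauchy--Schwarz split used for ordinary Titu) is the right move. One must also confirm that $p_3 = m/(m-2)$ is an admissible exponent; for $m = 2$ it degenerates to $+\infty$, which I would treat as a boundary case where $n^{(m-2)/m} = n^0 = 1$ and the claim collapses to the classical Cauchy--Schwarz/Titu inequality $\sum_i a_i^2/x_i \ge (\sum_i a_i)^2 / \sum_i x_i$. As a sanity check, equality propagates through H\"older exactly when the factored sequences $a_i^m/x_i$, $x_i$, and $1$ are all proportional in $i$, i.e. when the $x_i$ and $a_i$ are constant, matching the expected equality configuration. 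A convexity/Jensen argument would be a viable fallback, but I expect the H\"older route to be the cleanest.
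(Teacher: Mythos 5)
Your proof is correct and coincides with the paper's own argument: the paper also writes $a_i = 1^{(m-2)/m}\cdot(a_i^m/x_i)^{1/m}\cdot x_i^{1/m}$ and applies the three-factor H\"older inequality with exponents $m/(m-2)$, $m$, $m$, then raises to the $m$-th power. Your additional remarks on the degenerate case $m=2$ and the equality configuration are sound but not needed.
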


\begin{proof}
By Holder's inequality 
\begin{align}
(\sum_{i\Equal 1}^{n} 1)^{\dfrac{m-2}{m}}(\sum_{i\Equal 1}^{n} \dfrac{a_i^m}{x_i})^{\dfrac{1}{m}}(\sum_{i\Equal 1}^{n} x_i)^{\dfrac{1}{m}}&\geq \\
\sum_{i\Equal 1}^{n}& 1^{\dfrac{m-2}{m}}( \dfrac{a_i^m}{x_i})^{\dfrac{1}{m}}x_i^{\dfrac{1}{m}}\\
\rightarrow n^{\dfrac{m-2}{m}}(\sum_{i\Equal 1}^{n} \dfrac{a_i^m}{x_i})^{\dfrac{1}{m}}(\sum_{i\Equal 1}^{n} x_i)^{\dfrac{1}{m}}\geq  &\sum_{i\Equal 1}^{n} a_i\\
\rightarrow     n^{m-2} \sum_{i\Equal 1}^{n} \dfrac{a_i^m}{x_i}\geq \dfrac{(\sum_{i\Equal1}^n a_i)^m}{\sum_{i\Equal1}^n x_i}&
\end{align}
\end{proof}

\begin{thm}\label{Theorem: sufficient condition3}
A sufficient condition for a $k$-variable first degree polynomial $P(x_1,...,x_k)= a_0 +\sum_{i\Equal 1}^k a_i x_i $ with real coefficients ($a_i\in \mathbb{R}$) is to have: Euclidean distance $n$ variables.
\begin{equation}
| a_i| \leq c/\sqrt[2]{k},\;\;\;\; \forall i=1...k 
\end{equation}
\end{thm}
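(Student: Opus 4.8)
The plan is to mirror the two-dimensional argument of Theorem~\ref{Theorem: sufficient condition2}, replacing the bound used there by a clean application of Generalized Titu's Lemma (Lemma~\ref{Lemma: Titu}) specialized to $k$ dimensions. The whole proof reduces to sandwiching the quantity $\sum_q |a_q|\,|x_q - x_q'|$ between an upper bound on the polynomial gap and a lower bound on the Euclidean distance, so that the $(D,d)$-Lipschitz requirement is met term by term.

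First I would lower-bound the similarity distance. Applying Lemma~\ref{Lemma: Titu} with $m=2$, the non-negative reals $|x_q - x_q'|$ in the numerators, and unit denominators, yields
\[
\sum_{q=1}^k (x_q - x_q')^2 \;\geq\; \frac{\left(\sum_{q=1}^k |x_q - x_q'|\right)^2}{k},
\]
so that, taking square roots,
\[
d(\boldsymbol{l_i},\boldsymbol{l_j}) = \sqrt[2]{\sum_{q=1}^k (x_q - x_q')^2} \;\geq\; \frac{1}{\sqrt{k}}\sum_{q=1}^k |x_q - x_q'|.
\]

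Next I would upper-bound the change in the polynomial. Since $P$ is affine, the constant term $a_0$ cancels and the triangle inequality gives
\[
|P(x_1,\dots,x_k) - P(x_1',\dots,x_k')| = \left|\sum_{q=1}^k a_q (x_q - x_q')\right| \leq \sum_{q=1}^k |a_q|\,|x_q - x_q'|.
\]
Combining the two bounds, to guarantee $c$-fairness it suffices to establish
\[
\sum_{q=1}^k |a_q|\,|x_q - x_q'| \;\leq\; \frac{c}{\sqrt{k}}\sum_{q=1}^k |x_q - x_q'|,
\]
which holds term-by-term precisely when $|a_q| \leq c/\sqrt{k}$ for every $q$, giving the claimed sufficient condition.

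The only non-routine step is the distance lower bound; the remainder is the triangle inequality and a term-wise comparison. The subtlety to watch is the direction of the inequalities: we need a \emph{lower} bound on $d$ and an \emph{upper} bound on the polynomial gap, so Titu's lemma must be invoked in the form that lower-bounds the sum of squares by the squared sum of absolute differences rather than the reverse. Setting $k=2$ recovers the constant $c/\sqrt{2}$ of Theorem~\ref{Theorem: sufficient condition2}, a useful sanity check on the specialization.
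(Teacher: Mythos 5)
Your proof is correct and follows essentially the same route as the paper's: lower-bound the Euclidean distance by $\frac{1}{\sqrt{k}}\sum_q |x_q - x_q'|$ via Generalized Titu's Lemma (with $m=2$ and unit denominators), upper-bound the polynomial gap by the triangle inequality, and compare term by term to obtain $|a_i| \leq c/\sqrt{k}$. Your added remarks on the direction of the inequalities and the $k=2$ sanity check are accurate but do not change the argument.
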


\begin{proof}

For every two locations $\boldsymbol{l_1} = (x_1,...,x_k)$ and $\boldsymbol{l_2} = (x_1',...,x_k')$, a lower bound can be derived based on Generalized Titu's Lemma. 
\begin{equation}
    d (\boldsymbol{l_1},\boldsymbol{l_2})= \sqrt[2]{\sum_{i\Equal 1}^{k}(x_i-x_i')^2}\geq \sqrt[2]{(\sum_{i\Equal 1}^{k}|x_i-x_i'|)^2/k}
\end{equation}
On the other hand, the following inequality can be written for the polynomial.

\begin{align}
    |P(x_1,...,x_k)- P(x_1',...,x_k')| &= |\sum_{i\Equal 1}^{k}a_i(x_i-x_i')|\\
    &\leq \sum_{i\Equal 1}^{k}|a_i||(x_i-x_i')|  
\end{align}
By combining two equations the sufficiency condition can be derived from,
\begin{align}
     \sum_{i\Equal 1}^{k}|a_i||(x_i-x_i')|  \leq c \times (\sum_{i\Equal 1}^{k}|x_i-x_i'|)/\sqrt{k}\nonumber
\end{align}
\end{proof}

\section{Appendix (Polynomial Visualization)}\label{Section: appendix}

\revision{Figure~\ref{Fig: new fitting example} presents the real-world example of Fig.~\ref{Fig: Architecture} for NYC taxi dataset. It can be seen in the figure that a fair polynomial is fitted to output ML scores shown by blue dots. The degree of the polynomial is $6$, and the value of $c$ is set to $100$. }

\begin{figure}[t]
\includegraphics[scale=.4]{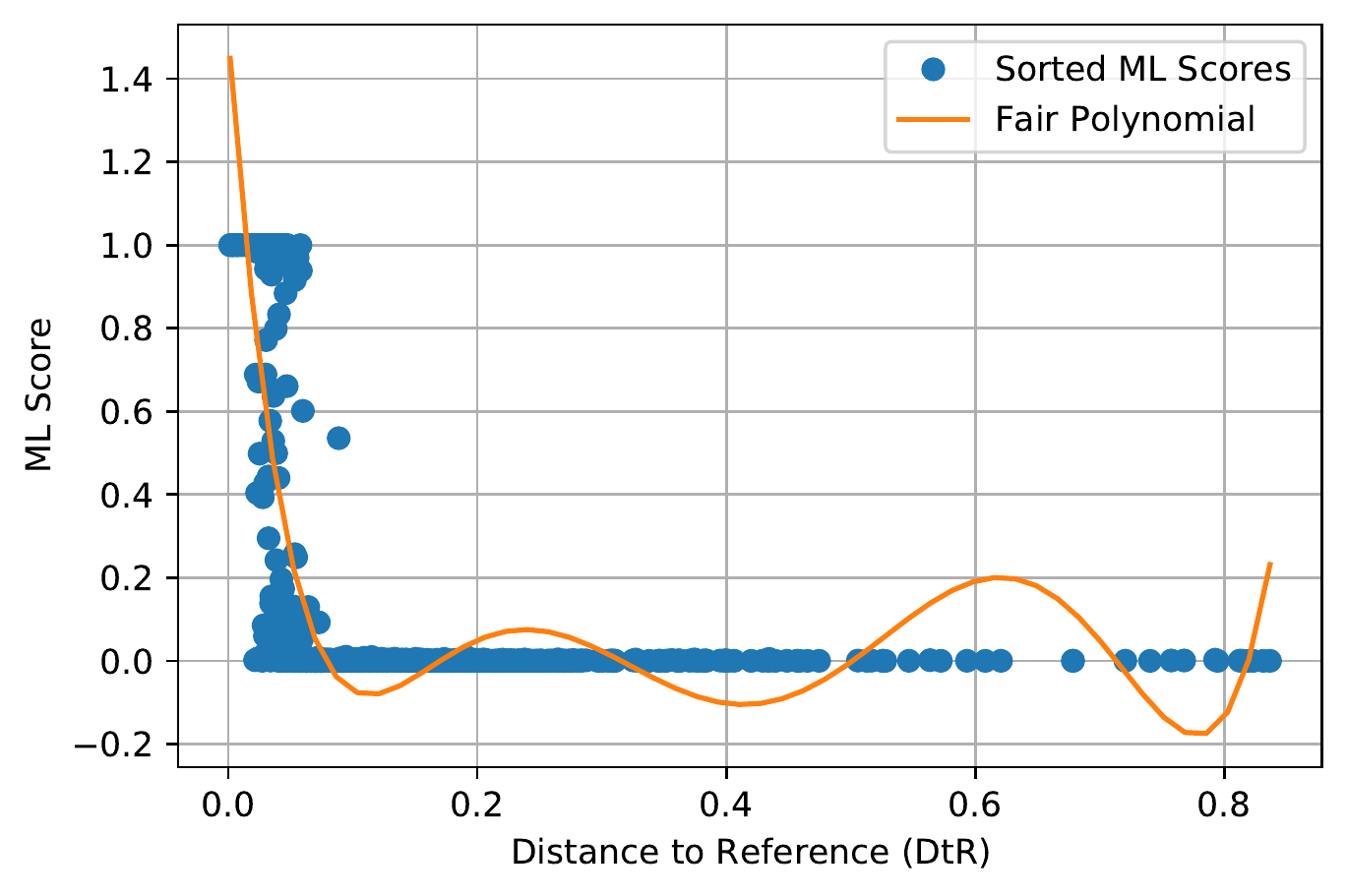}
\centering
\caption{\revision{An example of $c$-fair polynomial applied to the NYC taxi dataset.}}
\label{Fig: new fitting example}
\end{figure}

\end{document}